\documentclass[journal]{IEEEtran}
\usepackage{tabularx}
\usepackage{longtable}
\usepackage{subfigure}
\usepackage{enumitem}
\usepackage{amsmath,amssymb,amsthm}
\usepackage{mathtools}
\usepackage{graphicx}
\usepackage{cite}
\usepackage{url}
\usepackage{algorithm}
\usepackage{algpseudocode}
\usepackage{booktabs}
\usepackage{tabularx}
\usepackage{booktabs}

\newtheorem{theorem}{Theorem}

\title{\LARGE \bf
Finite-State Decentralized Policy-Based Control With Guaranteed Ground Coverage
}

\author{Hossein Rastgoftar
\thanks{Hossein Rastgoftar is with the Department of Aerospace and Mechanical Engineering, 
        Tucson, AZ 85719, USA
        {\tt\small hrastgoftar@arizona.edu}}%
}

\begin{document}
\title{Distributed Continuous Aerial Surveillance by UAS Swarms Under Formal Mission Specifications}
\maketitle
\begin{abstract}
Persistent aerial surveillance using multi-unmanned aerial systems (UASs) requires decentralized coordination, continuous team reconfiguration, and provable mission correctness despite limited onboard energy and communication constraints. This paper develops a distributed framework for continuous aerial surveillance under bounded Linear Temporal Logic (LTL) mission specifications. The proposed approach partitions the UAS team into stationary anchors and mobile workers operating under cyclic replacement modes, and constructs a deep neural network (DNN)-inspired communication topology that enables fully decentralized coordination through local interactions. A hierarchical bounded LTL specification formally captures  mode-to-mode reference consistency, cyclic team rotation, finite-time reachability, trajectory tracking, and prescribed surveillance coverage. By proving the finite-time convergence of the worker-agent coordination dynamics, the paper gaurantees the  finite-time satisfaction of the mission specification. To maximize sensing effectiveness, an information-theoretic optimization framework synthesizes the reference configuration of newly deployed worker agents by minimizing the Kullback--Leibler divergence between the surveillance-node distribution and the induced coverage density. The resulting reference configuration uniquely determines a deterministic, mode-dependent communication topology, eliminating online communication-graph optimization while preserving the formal mission guarantees. Finally, a decentralized quadrotor controller realizes the distributed references using only local communication. Numerical simulations demonstrate cyclic team reconfiguration, decentralized communication-topology synthesis, finite-time formation convergence, and certified persistent surveillance coverage.

\end{abstract}

\section{Introduction}
Aerial surveillance is a critical component of modern security operations,
providing persistent situational awareness for applications including
border security \cite{darazs2024barrier},
critical infrastructure protection \cite{costin2023bridges},
disaster response \cite{abbas2024disaster},
and emergency management \cite{yucesoy2025drones}. While satellite imagery offers broad-area coverage, its spatial and temporal resolutions are often insufficient for identifying dynamic threats. Low-altitude UASs equipped with high-resolution sensors enable real-time, close-range observation, but a single UAS is constrained by limited sensing range and battery endurance. These limitations motivate cooperative multi-UAS surveillance systems, in which multiple vehicles coordinate to provide persistent, scalable, and resilient monitoring. By dynamically reallocating sensing responsibilities and replacing depleted vehicles without interrupting the mission, such systems significantly improve coverage efficiency, operational robustness, and timely threat detection.

\subsection{Related Work}
Distributed multi-UAV coordination have been extensively investigated over the past decade. Recent advances include distributed time-varying formation and containment tracking \cite{cai2022distributed,gong2023resilient,liu2024distributed}, safety-critical formation control using control barrier functions \cite{liu2026distributed}, resilient event-triggered coordination under cyber and denial-of-service attacks \cite{ren2025resilient}, leader--follower formation control with prescribed performance guarantees \cite{chen2020leader}, and adaptive formation tracking in the presence of actuator faults \cite{wu2023distributed}. Persistent aerial coverage has likewise received significant attention through optimization-, geometry-, and distributed control-based frameworks \cite{liu2021cooperative,chen2021clustering,funada2023distributed,rezaee2024resilient,faghihi2025distributed,zhong2011distributed}. Cooperative path planning and adaptive formation strategies have been developed for multi-UAV surveillance in obstacle-rich environments \cite{liu2021cooperative}, while distributed coverage and data collection have been jointly optimized under sensing, communication, and obstacle constraints \cite{zhong2011distributed}. Distributed control barrier functions have been employed to eliminate coverage holes \cite{funada2023distributed}, resilient Voronoi-based methods have addressed cyberattacks on communication networks \cite{rezaee2024resilient}, and adaptive coverage algorithms have improved deployment in both known and unknown environments \cite{faghihi2025distributed}. Despite these advances, existing coverage frameworks primarily optimize sensing performance and trajectory generation without incorporating finite-time mission specifications, communication-topology synthesis, or formal correctness guarantees for persistent autonomous coverage.

Formal methods have also emerged as an effective tool for specifying, planning, and verifying autonomous multi-agent missions. Mixed-integer optimization has been employed for multi-agent motion planning under Signal Temporal Logic (STL) specifications with collision avoidance and automatic task allocation \cite{sun2022multi}. The MT$^\ast$ algorithm improves the scalability of automata-based planning for multi-robot systems through task decomposition and reduced product-graph construction \cite{gujarathi2022mt}. Team-level spatio-temporal specifications have been extended through t-STREL and differentiable robustness semantics, enabling gradient-based controller synthesis for multi-agent systems \cite{liu2025quantifying}. Zhao \emph{et al.} proposed STL-GO, which augments STL with graph operators to specify behaviors over multiple interaction topologies and supports distributed runtime monitoring \cite{zhao2025stl}. Ma \emph{et al.} introduced Spatial Aggregation Signal Temporal Logic (SaSTL), extending STL with spatial aggregation and counting operators for scalable runtime monitoring of large-scale distributed systems \cite{ma2021novel}, while Zhang \emph{et al.} integrated SaSTL-based monitoring with hierarchical UAV--UGV coverage control and collision-free path planning \cite{zhang2026temporal}. Formal verification techniques based on BDI agents, timed automata, and model checking have also been developed to certify the safety of autonomous vehicle platooning protocols \cite{kamali2017formal}. 
\subsection{Contributions}

Existing multi-UAS coordination and coverage methods provide effective
formation control and coverage optimization, while formal-methods
approaches enable rigorous task specification and verification.
However, existing frameworks do not simultaneously integrate
(i) formal mission specifications for persistent surveillance under
cyclic team reconfiguration and (ii) information-theoretic synthesis of
mode-dependent reference configurations for decentralized execution.
Building upon the DNN-inspired decentralized communication architecture
developed in \cite{10824818}, this paper addresses these limitations
through the following contributions.

\noindent\textbf{Contribution 1: Hybrid Persistent Surveillance Under Formal Mission Specifications:}
We formulate persistent aerial surveillance as a hybrid multi-agent
system operating under deterministic cyclic team reconfiguration and
develop a hierarchical bounded Linear Temporal Logic (LTL)
specification that unifies anchor invariance, mode-to-mode reference
consistency, cyclic team rotation, finite-time reachability,
trajectory tracking, and prescribed surveillance coverage.
The proposed framework establishes a direct connection between formal
mission specifications and decentralized controller design.
Furthermore, by integrating the proposed communication architecture
with the distributed coordination dynamics, the paper establishes
finite-time satisfaction of the prescribed mission specification.

\noindent\textbf{Contribution 2: Information-Theoretic Reference Synthesis for Cyclic Team Replacement:}
We develop an information-theoretic framework for synthesizing the
reference positions of newly introduced worker agents during cyclic
team replacement. The proposed method sequentially minimizes the
Kullback--Leibler divergence between the surveillance-node distribution
and the coverage density induced by the worker-agent team, thereby
placing replacement agents in underrepresented regions of the
surveillance domain. The resulting reference configuration not only
improves surveillance quality but also directly determines the
mode-dependent communication topology required for decentralized
execution.
\subsection{Outline}

The remainder of this paper is organized as follows. Section~\ref{Problem Statement} formulates the surveillance problem and presents the system model and mission specifications. Section~\ref{Structuring Inter-Agent Communication} develops the DNN-based communication architecture. Section~\ref{Network Dynamics and Control} presents the network dynamics and convergence analysis. Section~\ref{Information-Theoretic Coverage Augmentation} develops the information-theoretic reference synthesis framework. Section~\ref{Results} presents simulation results, and Section~\ref{Conclusion} concludes the paper.

\section{Problem Statement}\label{Problem Statement}

Consider an aerial security-surveillance mission performed by a team of
\(N\) UASs indexed by
$
\mathcal V=\{1,\ldots,N\},
$
which is partitioned as
\vspace{-0.25cm}
\begin{equation}
\mathcal V
=
\mathcal A
\dot\cup
\mathcal W,
\end{equation}
where \(\mathcal A\) denotes the anchor-agent set and
\(\mathcal W=\mathcal V\setminus\mathcal A\) denotes the worker-agent set.
The anchor agents are stationary reference nodes with known positions and
continuously broadcast their positions to neighboring worker agents. Let $\mathcal{R}=\{1,\ldots,M\}$ denote the set of operation modes. The directed graph
\vspace{-0.25cm}
\[
\mathcal{G}_{\mathrm{mode}}
=
\left(\mathcal{R},\mathcal{E}_{\mathrm{mode}}\right)
\]
defines the cyclic transitions among the operation modes. For each operation mode
\(\sigma\in\mathcal R\), let
\(\mathcal W_\sigma\subseteq\mathcal W\)
denote the subset of worker agents that preserve their assigned reference
positions throughout mode~\(\sigma\). The corresponding active-agent set is
\vspace{-0.25cm}
\begin{equation}
\mathcal V_\sigma
=
\mathcal A
\dot\cup
\mathcal W_\sigma.
\end{equation}
The remaining worker agents, 
$
\mathcal W\setminus\mathcal V_\sigma
$, 
are reassigned updated reference positions according to the prescribed
rotation rule while retaining their original agent identification numbers.
Consequently, the sets
\(\{\mathcal W_\sigma\}_{\sigma\in\mathcal R}\)
are generally not disjoint and do not form a partition of
\(\mathcal W\). The inter-agent communication is modeled by the directed graph
$
\mathcal{G}_{\sigma}
=
(\mathcal{V}_{\sigma},\mathcal{E}_{\sigma})$, 
for every $\sigma\in\mathcal{R}$,
where
\(
\mathcal{E}_{\sigma}\subseteq
\mathcal{V}_{\sigma}\times\mathcal{V}_{\sigma}
\)
denotes the communication edge set. For each worker agent
\(i\in\mathcal{W}_{\sigma}\), the corresponding in-neighbor set is defined as
\vspace{-0.25cm}
\begin{equation}
\mathcal{N}_{\sigma,i}
=
\left\{
j\in\mathcal{V}_{\sigma}
:
(j,i)\in\mathcal{E}_{\sigma}
\right\},
\qquad
\sigma\in\mathcal{R}.
\end{equation}


The UAS team surveils a region $\mathcal P$, which is uniformly discretized into the surveillance-node set $\mathcal D$. At the beginning of operation mode $\sigma\in\mathcal R$, the surveillance nodes are partitioned into the searched and unsearched sets,
\vspace{-0.25cm}
\begin{equation}
\mathcal D
=
\mathcal S_\sigma
\dot\cup
\mathcal U_\sigma,
\end{equation}
where $\mathcal S_\sigma$ and $\mathcal U_\sigma$ denote the searched and unsearched nodes, respectively. Initially, $\mathcal S_1=\emptyset$ and $\mathcal U_1=\mathcal D$.
Upon completion of operation mode $\sigma$, the searched and unsearched sets are updated according to
\vspace{-0.25cm}
\begin{equation}
\mathcal D
=
\mathcal S_{\sigma^+}
\dot\cup
\mathcal U_{\sigma^+},
\end{equation}
where $\sigma^+$ denotes the subsequent operation mode.

\subsection{Agent Position Representation}

The UAS team operates on a horizontal plane at the prescribed altitude
$z_0$. Each operation mode
$\sigma\in\mathcal R$
is executed over the finite discrete-time interval
\vspace{-0.25cm}
\begin{equation}
\mathcal K_\sigma
=
\{0,\ldots,T_\sigma\},\qquad \sigma\in \mathcal{R},
\end{equation}
where
$T_\sigma\in\mathbb N$ denotes the mode duration. The paper uses the following position notations:

\noindent \textbf{Actual Position:} The actual position of agent $i\in \mathcal{V}$ at time step
$k\in\mathcal K_\sigma$
is
\vspace{-0.25cm}
\begin{equation}
\mathbf r_i(k)
=
\begin{bmatrix}
x_i(k)&
y_i(k)&
z_i(k)
\end{bmatrix}^{\top}.
\end{equation}

\noindent \textbf{Mode Reference Position:} For each operation mode
$\sigma\in\mathcal R$,
define
\vspace{-0.25cm}
\begin{equation}
\mathbf a_{i,\sigma}
=
\begin{bmatrix}
X_{i,\sigma}&
Y_{i,\sigma}&
z_0
\end{bmatrix}^{\top}
\end{equation}
as the reference position of agent $i\in \mathcal{V}$ at the beginning of mode
$\sigma$.
These positions define the reference team configuration for the
corresponding operation mode.

\noindent\textbf{Online Desired Position:}
During operation mode $\sigma\in\mathcal R$, the desired position of worker agent
$i\in\mathcal W_\sigma$ is generated online according to
\vspace{-0.25cm}
\begin{equation}
\mathbf r_{i,d}(k)
=
\sum_{j\in\mathcal N_{\sigma,i}}
w_{ij}^{\sigma}(k)
\mathbf r_j(k),
\qquad
k\in\mathcal K_\sigma,
\end{equation}
where the communication weights satisfy the convexity constraints
\vspace{-0.25cm}
\begin{equation}
\sum_{j\in\mathcal N_{\sigma,i}}
w_{ij}^{\sigma}(k)
=
1,
\qquad
w_{ij}^{\sigma}(k)\ge0,
\end{equation}
for every
$i\in\mathcal W_\sigma$,
$\sigma\in\mathcal R$,
and
$k\in\mathcal K_\sigma$. Let
\vspace{-0.25cm}
\begin{equation}
k_{\sigma}^{\rm mid}
=
\left\lfloor
\frac{T_\sigma}{2}
\right\rfloor
\end{equation}
denote the midpoint of operation mode $\sigma$, where $T_\sigma$ is the
duration of mode $\sigma$. The communication weights are defined as
\vspace{-0.25cm}
\begin{equation}\label{comtweigh}
w_{ij}^{\sigma}(k)
=
\begin{cases}
(1-\beta(k))\varpi_{ij}^{\sigma}
+
\beta(k)\gamma_{ij}^{\sigma},
&
0\le k\le k_{\sigma}^{\rm mid},
\\[2mm]
\gamma_{ij}^{\sigma},
&
k_{\sigma}^{\rm mid}<k\le T_\sigma,
\end{cases}
\end{equation}
where $\varpi_{ij}^{\sigma}$ and $\gamma_{ij}^{\sigma}$ denote the initial
and final communication weights in mode $\sigma$, respectively, and
$\beta(k)$ is a quintic transition function satisfying
\vspace{-0.25cm}
\begin{equation}
\begin{aligned}
&\beta(0)=\dot{\beta}(0)=\ddot{\beta}(0)=0,\\
&\beta(k_{\sigma}^{\rm mid})=1,\qquad
\dot{\beta}(k_{\sigma}^{\rm mid})=
\ddot{\beta}(k_{\sigma}^{\rm mid})=0.
\end{aligned}
\end{equation}
Consequently,
\vspace{-0.25cm}
\begin{equation}
w_{ij}^{\sigma}(k)
=
\gamma_{ij}^\sigma,
\qquad
k\ge k_{\sigma}^{\rm mid},
\end{equation}
for every
$i\in\mathcal W_\sigma$,
$j\in\mathcal N_{\sigma,i}$,
and
$\sigma\in\mathcal R$,
placing the online desired position at the centroid of the neighboring agents. The initial communication weights are uniquely determined from the barycentric-coordinate constraints
\begin{equation}\label{initcomweights}
\sum_{j\in\mathcal N_{\sigma,i}}
\varpi_{ij}^{\sigma}
\mathbf a_{j,\sigma}
=
\mathbf a_{i,\sigma},
\qquad
\sum_{j\in\mathcal N_{\sigma,i}}
\varpi_{ij}^{\sigma}
=
1.
\end{equation}

\noindent \textbf{Terminal Desired Position:} The desired terminal position is defined as the terminal value of the
online desired trajectory,
\begin{equation}
\mathbf p_{i,\sigma}
=
\mathbf r_{i,d}(T),
\qquad
i\in\mathcal V_\sigma.
\end{equation}



\subsection{LTL Mission Specification}\label{LTL}
To formally specify the desired evolution of the aerial surveillance
mission, we employ LTL. The syntax of the LTL formulas considered in this work is defined
recursively as
\begin{equation}
\resizebox{0.99\hsize}{!}{%
$
\varphi
:=
\pi
\mid
\neg\varphi
\mid
\varphi_1\wedge\varphi_2
\mid
\varphi_1\vee\varphi_2
\mid
\varphi_1\rightarrow\varphi_2
\mid
\bigcirc\varphi
\mid
\Box\varphi
\mid
\Diamond_{[0,T]}\varphi,
$
}
\end{equation}
where $\pi$ denotes an atomic proposition,
$\neg$, $\wedge$, $\vee$, and $\rightarrow$
represent logical negation, conjunction, disjunction,
and implication, respectively.
The temporal operators
$\bigcirc$, $\Box$, and
$\Diamond_{[0,T]}$
denote the next, always,
and bounded-eventually operators.
The bounded-eventually operator
$\Diamond_{[0,T]}$
requires the enclosed proposition to become true within the finite
time interval $[0,T]$, which corresponds to the duration of one
operation mode.

The mission specification is constructed from the following atomic
propositions. For each surveillance node
$d\in\mathcal D$,
define
\begin{equation}
\pi_d
\Longleftrightarrow
d\in\mathcal S_\sigma,
\end{equation}
which indicates that surveillance node $d$ has been observed during operation mode $\sigma\in \mathcal{R}$. For each operation mode
$\ell\in\mathcal R$,
define
\begin{equation}
\pi_\sigma^\ell
\Longleftrightarrow
\sigma=\ell,
\end{equation}
which specifies that the aerial surveillance system is operating
under mode $\ell$. For each worker agent
$i\in\mathcal W_\sigma$,
define
\begin{equation}
\pi_{i,\sigma}^{\rm reach}
\Longleftrightarrow
\mathbf r_i
\in
\operatorname{co}
\left\{
\mathbf r_j:
j\in\mathcal N_{\sigma^+,i}
\right\},
\end{equation}
where
$\operatorname{co}(\cdot)$
denotes the convex hull operator.
This proposition states that the worker has entered the convex hull
generated by its prescribed in-neighbors. Finally, define
\begin{equation}\label{pim}
\pi_M
\Longleftrightarrow
\sigma=M,
\end{equation}
which indicates that the surveillance mission is executing its
final operation mode.

The aerial surveillance mission is specified through a collection
of temporal requirements governing the evolution of the agents, the communication topology, cyclic team deployment,
trajectory tracking, and surveillance coverage.
The complete mission specification is obtained as the conjunction
of these individual requirements.

\noindent\textbf{Anchor Invariance:}
The anchor agents remain stationary throughout the mission and continuously
provide reference positions for the workers. This requirement is
specified by
\begin{equation}
\varphi_{\rm anc}
=
\Box
\left(
\bigwedge_{\sigma\in \mathcal{R}}
\bigwedge_{i\in\mathcal A}
\mathbf r_i
=
\mathbf a_{i,\sigma}
\right).
\end{equation}

\noindent\textbf{Mode-to-Mode Reference Consistency:}
Upon completion of each operation mode, the terminal desired position of
every active agent becomes its reference position for the subsequent mode,
thereby guaranteeing continuity of the surveillance mission. This
requirement is specified by
\begin{equation}
\varphi_{\rm ref}
=
\Box
\left[
\bigwedge_{\sigma\in\mathcal R}
\bigwedge_{i\in\mathcal V_\sigma}
\left(
\mathbf p_{i,\sigma}
=
\bigcirc
\mathbf a_{i,\sigma^+}
\right)
\right].
\end{equation}

\noindent\textbf{Cyclic Team Rotation:}
The active team evolves according to the directed cyclic graph
$\mathcal{G}_{\mathrm{mode}}$ so that every team is periodically deployed.
This requirement is specified by
\begin{equation}
\varphi_{\rm rot}
=
\Box
\Bigg(
\bigwedge_{\ell=1}^{M-1}
\Big(
\pi_\sigma^\ell
\rightarrow
\bigcirc
\pi_\sigma^{\ell+1}
\Big)
\wedge
\Big(
\pi_\sigma^M
\rightarrow
\bigcirc
\pi_\sigma^{1}
\Big)
\Bigg).
\end{equation}

\noindent\textbf{Final Reachability:}
For each worker agent
$i\in\mathcal W_\sigma$, define the atomic proposition
\begin{equation}
\pi_{i,\sigma}^{\rm reach}
\Longleftrightarrow
\mathbf r_i
\in
\operatorname{co}
\left\{
\mathbf r_j:
j\in\mathcal N_{\sigma^+,i}
\right\},
\end{equation}
where
$\operatorname{co}(\cdot)$
denotes the convex hull operator.
During each operation mode, every worker agent is required to reach the
convex hull generated by its three in-neighbors before the mode
terminates. This guarantees that the desired communication geometry is
established prior to the subsequent mode transition. Accordingly,
\begin{equation}
\varphi_{\rm reach}
=
\Box
\left[
\bigwedge_{\ell\in\mathcal R}
\left(
\pi_\sigma^\ell
\rightarrow
\bigwedge_{i\in\mathcal W_\ell}
\Diamond_{[0,T]}
\pi_{i,\ell}^{\rm reach}
\right)
\right].
\end{equation}

\noindent\textbf{Tracking Requirement:}
During each operation mode, every worker agent must reach its online
desired position within a prescribed tolerance
$\varepsilon>0$ before the mode terminates. This requirement couples the
communication topology with the trajectory tracking controller and
ensures that the desired formation is realized before the subsequent mode
transition. Therefore,
\begin{equation}
\varphi_{\rm track}
=
\Box
\left[
\bigwedge_{\sigma\in\mathcal R}
\bigwedge_{i\in\mathcal W_\sigma}
\Diamond_{[0,T]}
\left(
\|
\mathbf r_i
-
\mathbf r_{i,d}
\|
\le
\varepsilon
\right)
\right].
\end{equation}


\noindent\textbf{Communication-Topology Requirement:} For each worker agent $i\in\mathcal W_\sigma$, define the atomic proposition
\begin{equation}\label{picom}
\pi_{i,\sigma}^{\rm comm}
\Longleftrightarrow
|\mathcal N_{\sigma,i}|=3,
\end{equation}
which specifies that worker agent $i$ communicates with exactly three
prescribed in-neighbor agents during operation mode $\sigma$.
This guarantees that the
layered DNN communication structure required for decentralized reference
generation is preserved throughout the surveillance mission. Accordingly,
\begin{equation}
\varphi_{\rm comm}
=
\Box
\left[
\bigwedge_{\sigma\in\mathcal R}
\bigwedge_{i\in\mathcal W_\sigma}
\pi_{i,\sigma}^{\rm comm}
\right].
\end{equation}

\noindent\textbf{Coverage Requirement:}
Assume that each UAS is equipped with a downward-looking conical sensor whose
ground footprint is a circular disk of radius $r>0$.
Let
$
\mathcal S_{\sigma^+}\subseteq\mathcal D
$
denote the set of surveillance nodes observed by at least one
UAS upon completion of mode $\sigma$. The corresponding coverage ratio is
\begin{equation}
\eta_{\sigma^+}
=
\frac{|\mathcal S_{\sigma^+}|}{|\mathcal D|},
\end{equation}
which represents the fraction of surveillance nodes that have been covered.
Given a prescribed coverage threshold $x\in(0,1]$, the mission is required to achieve
at least $100x\%$ coverage upon completion of the final mode. This requirement
is expressed as
\begin{equation}
\varphi_{\rm cov}
=
\Box
\left[
\pi_M
\rightarrow
\Diamond_{[0,T]}
\left(
\eta_{M^+}\ge x
\right)
\right],
\end{equation}
where $\pi_M$, defined in \eqref{pim}, indicates that the system is operating in the final mode.

\noindent\textbf{Overall Mission Specification:}
The aerial surveillance mission is successfully accomplished if and only
if all temporal requirements above are satisfied simultaneously.
Accordingly, the overall mission specification is
\begin{equation}
\varphi
=
\varphi_{\rm anc}
\wedge
\varphi_{\rm ref}
\wedge
\varphi_{\rm rot}
\wedge
\varphi_{\rm reach}
\wedge
\varphi_{\rm track}
\wedge\varphi_{\rm comm}\wedge
\varphi_{\rm cov}.
\end{equation}
\noindent\textbf{Mode Selection Requirement:}
The number of operation modes $M$ is selected during the mission-design stage so that one complete surveillance cycle satisfies the prescribed coverage requirement. Specifically, for a desired coverage threshold $x\in(0,1]$, the number of modes is chosen as the smallest positive integer satisfying
\begin{equation}
M
=
\min
\left\{
m\in\mathbb N:
\eta_{m^+}\ge x
\right\},
\end{equation}
where $\eta_{m^+}$ denotes the cumulative coverage ratio achieved upon completion of the $m$th operation mode. Consequently, a complete cycle of the cyclic team-rotation schedule guarantees that at least $100x\%$ of the surveillance nodes have been observed.

\subsection{Objectives of the Paper}
The main goal of this paper is to build a low-level network control model so that the overall mission constraints and requirements, specified by $\varphi$, are met and as the result, the UAS team operation can be deterministically planned by cyclic graph $\mathcal G_\sigma
=
(\mathcal V_\sigma,\mathcal E_\sigma)$ with the properties specified in Section \ref{LTL}. To this end, we solve the following problems:

\noindent\textbf{Problem 1: Structuring Inter-Agent Communication:}
For each operation mode $\sigma\in\mathcal R$, construct a layered
communication topology directly from the arbitrary reference
configuration
$\{\mathbf a_{i,\sigma}\}_{i\in\mathcal V_\sigma}$
using the DNN-inspired communication synthesis algorithm developed in
\cite{10824818}. Unlike the original framework, where the communication
topology is generated for a single reference configuration, the proposed
surveillance framework repeatedly synthesizes mode-dependent
communication graphs corresponding to the evolving reference
configurations induced by cyclic team reconfiguration. The resulting
communication topology ensures that every worker agent
$i\in\mathcal W_\sigma$ communicates with exactly three prescribed
in-neighbor agents,
$\mathcal N_{\sigma,i}$,
thereby enabling fully decentralized reference generation during every
operation mode. 

\noindent\textbf{Problem 2: Decentralized Convergence Guarantee:}
Develop a decentralized control strategy for each worker agent \(i\in\mathcal{W}_{\sigma}\) that guarantees convergence to a desired position \(\mathbf p_{i,\sigma}\), i.e.,
\begin{equation}
\lim_{t\to\infty}\mathbf r_i(t)
=
\mathbf p_{i,\sigma},
\end{equation}
where \(\mathbf p_{i,\sigma}\) is not known a priori to agent  \(i\in\mathcal{W}_{\sigma}\) and satisfies
\begin{equation}
\mathbf p_{i,\sigma}
\in
\operatorname{co}
\left\{
\mathbf a_{j,\sigma^+} :
j\in\mathcal N_{\sigma^{+},i}
\right\}.
\end{equation}
The control law must rely solely on locally available information exchanged through the communication network.

\noindent\textbf{Problem 3: Information-Theoretic Coverage Augmentation:}
Given the final desired positions associated with the worker-agent set
\(\mathcal{W}_{\sigma}\) and the surveillance-node set \(\mathcal{D}\), determine the reference positions of the augmented worker agents
\begin{equation}
\mathcal{W}_{\sigma^{+}}\setminus\mathcal{W}_{\sigma}
\end{equation}
such that the coverage density generated by the augmented team
\(\mathcal{W}_{\sigma^{+}}\)
minimizes an information-theoretic distance measure with respect to the target density. The placement strategy shall maximize coverage in regions that are insufficiently represented by the current team while guaranteeing satisfaction of the coverage specification \(\varphi_{\rm cov}\).

\begin{algorithm}[t]
\caption{Construction of the DNN Communication Topology for Mode $\sigma$}
\label{alg:DNN}
\begin{algorithmic}[1]

\Require Reference agent positions
$\{\mathbf a_{i,\sigma}\}_{i\in\mathcal V_\sigma}$

\Ensure
Worker groups
$\mathcal V_{\sigma,1},\ldots,\mathcal V_{\sigma,m_\sigma}$,
layers
$\mathcal L_{\sigma,0},\ldots,\mathcal L_{\sigma,m_\sigma}$,
and neighbor sets
$\{\mathcal N_{\sigma,i}\}$.

\State Determine the boundary-agent set
$\mathcal V_{B,\sigma}$.

\State Compute the core anchor agent
$b_c$ using (16).

\State Define the anchor set $
\mathcal A
=
\mathcal V_{B,\sigma}\cup\{b_c\}$.
\State Set $\mathcal V_{\sigma,0}
=
\mathcal L_{\sigma,0}
=
\mathcal A,
\qquad
\bar{\mathcal L}_{\sigma,0}
=
\mathcal V_\sigma\setminus\mathcal A$.
\State Triangulate
$\mathrm{co}(\mathcal A)$:~
$
\mathcal T_{\sigma,0}
=
\{
\mathcal R_{\sigma,0,1},
\dots,
\mathcal R_{\sigma,0,m_{\sigma,0}}
\}.
$

\State $l\gets1$

\While{$\bar{\mathcal L}_{\sigma,l-1}\neq\emptyset$}

    \State
    $\mathcal V_{\sigma,l}\gets\emptyset$

    \State
    $\mathcal T_{\sigma,l}\gets\emptyset$

    \For{each triangular cell
    $\mathcal R_{\sigma,l-1,q}\in\mathcal T_{\sigma,l-1}$}

        \If{$\mathrm{co}(\mathcal R_{\sigma,l-1,q})$
        contains at least one worker}
            \State Select the enclosed worker
            \[
            i^\star
            =
            \arg\min_i
            \sum_{j\in\mathcal R_{\sigma,l-1,q}}
            \|
            \mathbf a_{i,\sigma}
            -
            \mathbf a_{j,\sigma}
            \|.
            \]
            \State
            $\mathcal N_{\sigma,i^\star}
            \gets
            \mathcal R_{\sigma,l-1,q}$

            \State
            $\mathcal V_{\sigma,l}
            \gets
            \mathcal V_{\sigma,l}
            \cup
            \{i^\star\}$

            \State Subdivide
            $\mathcal R_{\sigma,l-1,q}$
            into three triangular cells
            \State using
            $i^\star$, and append them to
            $\mathcal T_{\sigma,l}$.

        \EndIf

    \EndFor

    \State
    $\mathcal L_{\sigma,l}
    \gets
    \mathcal L_{\sigma,l-1}
    \dot\cup
    \mathcal V_{\sigma,l}$

    \State
    $\bar{\mathcal L}_{\sigma,l}
    \gets
    \mathcal V_\sigma
    \setminus
    \mathcal L_{\sigma,l}$

    \State
    $l\gets l+1$

\EndWhile

\State
$m_\sigma\gets l-1$

\end{algorithmic}
\end{algorithm}

\section{Structuring Inter-Agent Communication}\label{Structuring Inter-Agent Communication}

For each mode $\sigma\in\mathcal R$, $\mathcal{V}_\sigma$
is partitioned into $m_\sigma+1$ disjoint groups,
\begin{equation}
\mathcal V_\sigma
=
\dot{\bigcup}_{l=0}^{m_\sigma}
\mathcal V_{\sigma,l},
\qquad
\mathcal V_{\sigma,l}
\cap
\mathcal V_{\sigma,h}
=
\emptyset,
\quad
l\neq h,
\end{equation}
where
\begin{equation}
\mathcal V_{\sigma,0}
=
\mathcal A.
\end{equation}
These groups induce the DNN layers
$\mathcal L_{\sigma,0},\ldots,\mathcal L_{\sigma,m_\sigma}$,
which are recursively defined by
\begin{equation}
\mathcal L_{\sigma,l}
=
\begin{cases}
\mathcal A,
&
l=0,
\\
\mathcal L_{\sigma,l-1}
\dot\cup
\mathcal V_{\sigma,l},
&
l=1,\ldots,m_\sigma.
\end{cases}
\end{equation}

\subsection{Anchor-Agent Layer}

Let $\mathcal V_{B,\sigma}$ denote the agents lying on the
boundary of the convex hull generated by the reference
configuration
\(
\{\mathbf a_{i,\sigma}\}_{i\in\mathcal V_\sigma}.
\)
A core anchor agent is selected as
\begin{equation}
b_c
=
\arg\min_{i\in
\mathcal V_\sigma\setminus
\mathcal V_{B,\sigma}}
\;
\sum_{j\in\mathcal V_{B,\sigma}}
\|
\mathbf a_{i,\sigma}
-
\mathbf a_{j,\sigma}
\|.
\end{equation}
The anchor set is defined by
\begin{equation}
\mathcal A
=
\mathcal V_{B,\sigma}
\cup
\{b_c\},
\qquad
\sigma\in\mathcal R.
\end{equation}
Consequently,
\begin{equation}
\mathcal L_{\sigma,0}
=
\mathcal A.
\end{equation}

The convex hull generated by the anchor agents is decomposed
into simplices, forming the initial geometric scaffold used to
construct the communication topology.

\subsection{Worker-Agent Layers}\label{Worker-Agent Layers}

The remaining communication layers are generated recursively
from the reference configuration using the triangular-cell
expansion procedure presented by Algorithm \ref{alg:DNN}
\cite{10824818}. Let $
\mathcal T_{\sigma,l-1}
=
\{
\mathcal R_{\sigma,l-1,1},
\ldots,
\mathcal R_{\sigma,l-1,m_{\sigma,l-1}}
\}
$
denote the collection of triangular cells induced by the agents
belonging to layer $\mathcal L_{\sigma,l-1}$. For each triangular cell
$\mathcal R_{\sigma,l-1,q}$,
$q=1,\ldots,m_{\sigma,l-1}$,
one enclosed worker agent, when such an agent exists, is assigned
to the next layer. The selected agent minimizes the aggregate
distance to the three vertices of the corresponding triangular
cell. Repeating this procedure recursively generates the layers
$
\mathcal L_{\sigma,1},
\mathcal L_{\sigma,2},
\ldots,
\mathcal L_{\sigma,m_\sigma}$.

The resulting communication graph possesses a strictly layered
feedforward structure satisfying
\begin{equation}\label{strictlylayered}
\mathcal N_{\sigma,i}
\subseteq
\mathcal L_{\sigma,l-1},
\qquad
\forall i\in\mathcal L_{\sigma,l},
\quad
l\ge 1.
\end{equation}

Per \eqref{picom}, every worker agent communicates with exactly three
agents from the preceding layer, forming an enclosing triangle
in the reference configuration. This construction uniquely
defines the neighbor sets $\mathcal N_{\sigma,i}$ used in the
barycentric update law
\begin{equation}
\mathbf r_{i,d}(k)
=
\sum_{j\in\mathcal N_{\sigma,i}}
w_{ij}^{\sigma}(k)\,
\mathbf r_j(k),
\end{equation}
and therefore provides the communication topology underlying
the LTL mission specification.

For illustration, consider a team of $N=13$ agents forming the two-dimensional configuration shown in Fig.~\ref{CellDecomposition}(a)--(d), where $\mathcal{V}_B=\{1,\dots,4\}$ ($N_B=4$) denotes the boundary agents. Agent $b_c=12$ is selected as the core agent, yielding $\mathcal{V}_{\sigma,0}=\mathcal{L}_{\sigma,0}=\{1,\dots,4,12\}$ for the first layer. The convex hull generated by $\mathcal{L}_{\sigma,0}$ is decomposed into $m_0=4$ triangular cells defined by $\mathcal{R}_{0,1}=\{1,2,12\}$, $\mathcal{R}_{0,2}=\{2,3,12\}$, $\mathcal{R}_{0,3}=\{3,4,12\}$, and $\mathcal{R}_{0,4}=\{4,1,12\}$. Since 
$
\mathrm{co}\{\mathbf{q}_i:i\in \mathcal{R}_{0,h}\}, 
$
$h=1,\cdots,4$, contains at least one agent, the decomposition generates $m_1=12$ triangular cells (blue) and the layer $\mathcal{V}_{\sigma,1}=\{11,10,6,5\}$. Among the resulting regions,
$
\mathrm{co}\{\mathbf{a}_i:i\in \mathcal{R}_{1,h}\}
$
 only those with $h=3,7,10,11$ contain a single agent, yielding the second-layer agents $\mathcal{V}_{\sigma,2}=\{7,8,9,13\}$. The resulting DNN defines the inter-agent communication topology.
\begin{figure}[ht]
\center
\includegraphics[width=3.3 in]{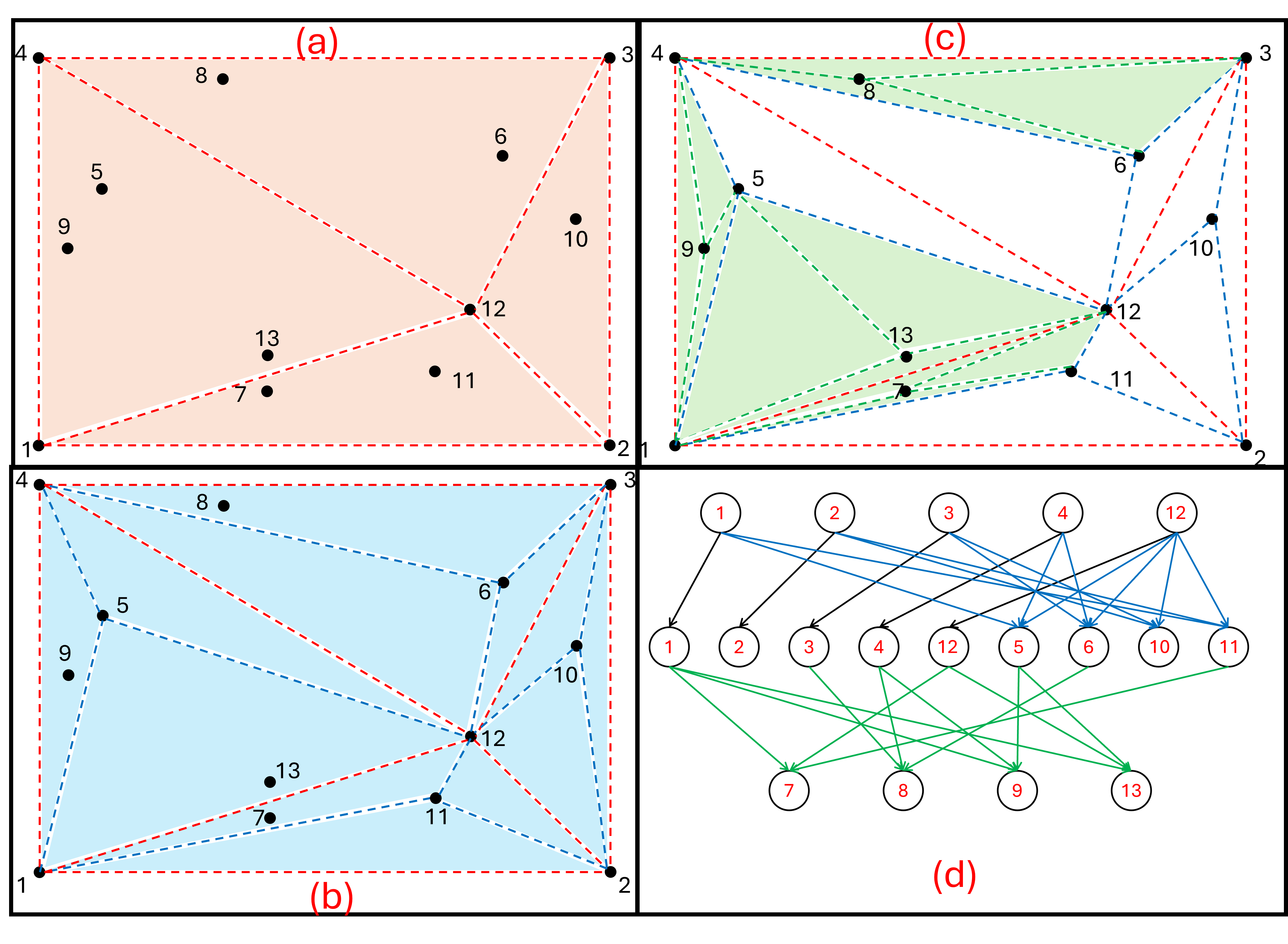}
\vspace{-.4cm}
\caption{
(a) The convex region $CONV(\mathcal{R}_{0,h})$, for $h=1,\dots,4$, includes at least one agent, yielding $m_1=12$ triangular cells (blue), each defined by three agents which result in $\mathcal{V}_{\sigma,1} = \{11, 10, 6, 5\}$.
(b) Among these, $CONV(\mathcal{R}_{1,h})$ contains a single agent for $h=3,7,10,11$; the rest are empty.
(c) At layer 2, $\mathcal{V}_{\sigma,2} = \{7,8,9,13\}$ are assigned as mentees based on neighboring cells.
(d) The resulting DNN determines the communication links between agents \cite{rastgoftar2026finite}. 
}
\label{CellDecomposition}
\end{figure}

\subsection{Transient Communication Weights}

For each worker agent $i\in\mathcal W_\sigma$, the communication weights are
computed according to \eqref{comtweigh} using the initial communication weights
$\varpi_{ij}^\sigma$, the target communication weights
$\gamma_{ij}^\sigma$, and the mode duration $T_\sigma$, where
$\sigma\in\mathcal{R}$. The initial communication weights are determined
from the reference configuration of the active agents using
\eqref{initcomweights}.
The final communication weights are obtained from the spatial distribution of
the unsearched targets. Let each unsearched target
\(\mathbf d_\ell\in\mathcal U_\sigma\) define a normal distribution
\[
\rho_\ell(\mathbf q)
=
\mathcal N(\mathbf q;\mathbf d_\ell,\Sigma_\ell),
\]
where \(\mathbf d_\ell\) is the mean and \(\Sigma_\ell\) is the covariance
matrix. For each worker agent \(i\in\mathcal W_\sigma\), define the desired
coverage region
\[
\mathcal H_{\sigma,i}
=
\mathrm{co}\{\mathbf p_j:j\in\mathcal N_{\sigma,i}\}.
\]
The probability mass of the unsearched target distributions contained in
\(\mathcal H_{\sigma,i}\) is given by
\begin{equation}
J_{\sigma,i}
=
\sum_{\mathbf d_\ell\in\mathcal U_\sigma}
\int_{\mathcal H_{\sigma,i}}
\rho_\ell(\mathbf q)\,d\mathbf q .
\end{equation}
The desired local convex hull is selected to maximize the enclosed probability
mass of the unsearched target distributions:
\begin{equation}
\mathcal H_{\sigma,i}^\star
=
\arg\max_{\mathcal H_{\sigma,i}}
J_{\sigma,i}.
\end{equation}
The final desired position \(\mathbf p_{\sigma,i}\) is then defined as the
centroid of the selected convex hull \(\mathcal H_{\sigma,i}^\star\), i.e.,
\begin{equation}
\mathbf p_{\sigma,i}
=
\frac{1}{|\mathcal H_{\sigma,i}^\star|}
\int_{\mathcal H_{\sigma,i}^\star}
\mathbf q\,d\mathbf q,
\end{equation}
where \(|\mathcal H_{\sigma,i}^\star|\) denotes the area of
\(\mathcal H_{\sigma,i}^\star\). Thus, \(\mathbf p_{\sigma,i}\) is determined
by the geometry of the probability-informed convex hull, rather than by the
centroid of the mean locations of the normal distributions. The final communication weights \(\gamma_{ij}^{\sigma}\) are then obtained by
expressing \(\mathbf p_i\) as a convex combination of the neighboring target
positions:
\begin{equation}
    \mathbf p_i
    =
    \sum_{j\in\mathcal N_{\sigma,i}}
    \gamma_{ij}^{\sigma}\mathbf p_j,
    \qquad
    \sigma\in\mathcal R,~ i\in\mathcal W_\sigma,
\end{equation}
subject to
\begin{equation}
    \sum_{j\in\mathcal N_{\sigma,i}}\gamma_{ij}^{\sigma}=1,
    \qquad
    \gamma_{ij}^{\sigma}\ge 0,
    \qquad
    j\in\mathcal N_{\sigma,i}.
\end{equation}
Thus, the transient communication weights continuously deform the local
barycentric representation from the initial reference geometry to a
coverage-driven final geometry determined by the distribution of the remaining
unsearched targets. 


\begin{theorem}
\label{thm:recursive_anchor_coordinates}

Consider a mode $\sigma\in\mathcal R$ over the finite time interval
$\mathcal K_\sigma=\{0,\ldots,T_\sigma\}$, and suppose that
\eqref{picom} and \eqref{strictlylayered} are satisfied.
Assume that the anchor agents are affinely independent and that the
finite-time tracking specification
\[
\varphi_{\rm track}^{\sigma}
=
\bigwedge_{i\in\mathcal W_\sigma}
\Diamond_{[0,T]}
\left(
\left\|
\mathbf r_i(k)-\mathbf r_{i,d}(k)
\right\|
\le \epsilon
\right)
\]
holds for some $\epsilon\ge0$. Then every desired position admits a
recursive barycentric representation with respect to the anchor agents:
\[
\mathbf r_{i,d}(k)
=
\sum_{a\in\mathcal A}
\lambda_{ia}^{\sigma}(k)\,
\mathbf r_a(k),
\qquad
i\in\mathcal W_\sigma .
\]
The anchor coordinates are initialized as
\[
\lambda_{aa}^{\sigma}=1,
\qquad
\lambda_{ab}^{\sigma}=0,
\quad a\neq b,
\qquad a,b\in\mathcal A,
\]
and are propagated recursively according to
\begin{equation}\label{coreq}
\lambda_{ia}^{\sigma}(k)
=
\sum_{j\in\mathcal N_{\sigma,i}}
w_{ij}^{\sigma}(k)\,
\lambda_{ja}^{\sigma}(k),
\qquad
a\in\mathcal A .
\end{equation}
Moreover,
\[
\lambda_{ia}^{\sigma}(k)\ge0,
\qquad
\sum_{a\in\mathcal A}
\lambda_{ia}^{\sigma}(k)=1.
\]
Consequently, every actual worker position satisfies
\[
\bigwedge_{i\in\mathcal W_\sigma}
\Diamond_{[0,T]}
\left(
\left\|
\mathbf r_i(k)
-
\sum_{a\in\mathcal A}
\lambda_{ia}^{\sigma}(k)\,
\mathbf r_a(k)
\right\|
\le \epsilon
\right).
\]
Furthermore, if the transient communication weights converge to the
constants
\[
\lim_{k\rightarrow\infty}
w_{ij}^{\sigma}(k)
=
\gamma_{ij}^{\sigma},
\qquad
j\in\mathcal N_{\sigma,i},
\]
where
\[
\gamma_{ij}^{\sigma}\ge0,
\qquad
\sum_{j\in\mathcal N_{\sigma,i}}
\gamma_{ij}^{\sigma}=1,
\]
then the anchor coordinates satisfy the limiting recursion
\[
\lambda_{ia}^{\sigma}
=
\sum_{j\in\mathcal N_{\sigma,i}}
\gamma_{ij}^{\sigma}\,
\lambda_{ja}^{\sigma},
\qquad
a\in\mathcal A .
\]
\end{theorem}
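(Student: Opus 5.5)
The plan is strong induction on the layer index $l$ of the strictly layered structure \eqref{strictlylayered}, with the time step $k\in\mathcal K_\sigma$ held fixed. The claim to carry through the induction is that every $i\in\mathcal L_{\sigma,l}$ admits coefficients $\lambda_{ia}^\sigma(k)\ge0$ with $\sum_{a\in\mathcal A}\lambda_{ia}^\sigma(k)=1$ and $\mathbf r_{i,d}(k)=\sum_{a}\lambda_{ia}^\sigma(k)\mathbf r_a(k)$. For workers the quantities must be interpreted suitably, as discussed below.

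\textbf{Base case.} For $l=0$, $\mathcal L_{\sigma,0}=\mathcal A$. Here the identity initialization $\lambda_{aa}^\sigma=1$, $\lambda_{ab}^\sigma=0$ trivially gives $\mathbf r_a=\sum_b\lambda_{ab}^\sigma\mathbf r_b$, and these coefficients are nonnegative and sum to one.

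\textbf{Inductive step.} Take $i\in\mathcal V_{\sigma,l}$. By \eqref{picom} and \eqref{strictlylayered}, its three in-neighbors lie in $\mathcal L_{\sigma,l-1}$, so \eqref{coreq} only uses coefficients that are already defined. Substituting gives
\[
\sum_{j\in\mathcal N_{\sigma,i}} w_{ij}^\sigma(k)\sum_{a}\lambda_{ja}^\sigma(k)\mathbf r_a(k)=\sum_a\lambda_{ia}^\sigma(k)\mathbf r_a(k).
\]
Nonnegativity follows from $w_{ij}^\sigma(k)\ge0$ and $\lambda_{ja}^\sigma(k)\ge0$. Summing over $a$ and using $\sum_j w_{ij}^\sigma(k)=1$ gives unit sum.

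\textbf{Main obstacle.} The desired law uses the actual neighbor positions $\mathbf r_j(k)$, and for a worker neighbor $j$ this is not $\mathbf r_{j,d}(k)$. So the exact representation requires interpreting the recursion in the ideal/desired sense, writing $\mathbf r_j$ as its desired position for $j\in\mathcal W_\sigma$. Affine independence of the anchors makes the anchor coordinates unique, which justifies calling them \emph{the} coordinates. I will state this interpretation explicitly and note that the recursion is otherwise purely algebraic. The time-weight profile \eqref{comtweigh} matters only through convexity, which holds since $\beta(k)\in[0,1]$ interpolates between two convex weight sets.

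\textbf{Tracking consequence.} The tracking statement follows by substituting the representation into $\varphi^\sigma_{\rm track}$. At the instant where $\|\mathbf r_i-\mathbf r_{i,d}\|\le\epsilon$ holds, replace $\mathbf r_{i,d}$ by $\sum_a\lambda_{ia}^\sigma\mathbf r_a$. If tracking errors of neighbors were accounted for instead, one would bound the propagated error by $\epsilon$ times a convex combination, which stays at most $\epsilon$ per layer. I will keep the simpler substitution argument.

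\textbf{Limiting recursion.} Since each $\lambda_{ia}^\sigma(k)$ is a finite sum of products of weights along layered paths, it is a polynomial in finitely many $w_{ij}^\sigma(k)$, and hence continuous in them. Passing $k\to\infty$ in \eqref{coreq}, again by induction on layers, shows that the limits $\lambda_{ia}^\sigma$ exist and satisfy the stated recursion with $\gamma_{ij}^\sigma$. Nonnegativity and unit sum are preserved because closed conditions are preserved under limits. In fact, by \eqref{comtweigh} the weights are constant for $k>k_\sigma^{\rm mid}$, so the limit is attained in finite time.
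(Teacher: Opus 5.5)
Your proposal follows the paper's proof essentially step for step: layer induction from the identity coordinates on $\mathcal A$, substitution into the barycentric law with convexity inherited from the weights, substitution at the instant supplied by $\varphi_{\rm track}^{\sigma}$, and passage to the limit in \eqref{coreq}, where your observation that the weights are already constant for $k>k_\sigma^{\rm mid}$ is cleaner than the paper's formal limit on a finite horizon. The actual-versus-desired caveat you raise is exactly what the paper glosses over, since its inductive hypothesis asserts $\mathbf r_j(k)=\sum_{a\in\mathcal A}\lambda_{ja}^{\sigma}(k)\mathbf r_a(k)$ for the actual positions of earlier-layer workers; the identity therefore holds unconditionally only for first-layer workers, whose in-neighbors are all anchors, so your explicit idealization is the right way to make the argument sound. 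That idealization must also cover $\varphi_{\rm track}^{\sigma}$, which has to be read relative to the same idealized reference: with actual neighbor positions one only gets $\|\mathbf r_i-\sum_{a}\lambda_{ia}^{\sigma}\mathbf r_a\|\le\epsilon+\max_{j\in\mathcal N_{\sigma,i}}\|\mathbf r_j-\sum_{a}\lambda_{ja}^{\sigma}\mathbf r_a\|$, and only at a common instant, which the agent-wise $\Diamond_{[0,T]}$ does not provide.
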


\begin{proof}
For the anchor layer,
$\mathcal L_{\sigma,0}=\mathcal A$,
the result follows immediately from $\lambda_{aa}^{\sigma}=1$ and $\lambda_{ab}^{\sigma}=0$ for $a\neq b$.
Assume that every agent
$j\in\mathcal L_{\sigma,l-1}$
admits the representation
\[
\mathbf r_j(k)
=
\sum_{a\in\mathcal A}
\lambda_{ja}^{\sigma}(k)\,
\mathbf r_a(k).
\]
Since
$\mathcal N_{\sigma,i}\subseteq\mathcal L_{\sigma,l-1}$,
every in-neighbor of
$i\in\mathcal L_{\sigma,l}$
satisfies the induction hypothesis. Using the barycentric update law,
\[
\begin{split}
\mathbf r_{i,d}(k)
&=
\sum_{j\in\mathcal N_{\sigma,i}}
w_{ij}^{\sigma}(k)\,
\mathbf r_j(k)\\
&=
\sum_{j\in\mathcal N_{\sigma,i}}
w_{ij}^{\sigma}(k)
\sum_{a\in\mathcal A}
\lambda_{ja}^{\sigma}(k)\,
\mathbf r_a(k)\\=&
\sum_{a\in\mathcal A}
\left(
\sum_{j\in\mathcal N_{\sigma,i}}
w_{ij}^{\sigma}(k)
\lambda_{ja}^{\sigma}(k)
\right)
\mathbf r_a(k).
\end{split}
\]
Hence, \eqref{coreq} follows, proving the recursive representation. Since
$w_{ij}^{\sigma}(k)\ge0$
and
$\lambda_{ja}^{\sigma}(k)\ge0$,
it follows that
$\lambda_{ia}^{\sigma}(k)\ge0$.
Furthermore,
\vspace{-0.25cm}
\[
\begin{split}
\sum_{a\in\mathcal A}
\lambda_{ia}^{\sigma}(k)=
\sum_{j\in\mathcal N_{\sigma,i}}
w_{ij}^{\sigma}(k)
\sum_{a\in\mathcal A}
\lambda_{ja}^{\sigma}(k)=
\sum_{j\in\mathcal N_{\sigma,i}}
w_{ij}^{\sigma}(k)
=1,
\end{split}
\]
which proves convexity. By the finite-time tracking specification
$\varphi_{\rm track}^{\sigma}$, for every
$i\in\mathcal W_\sigma$ there exists
$k_i^\star\in\{0,\ldots,T\}$ such that
\vspace{-0.25cm}
\begin{equation}
\left\|
\mathbf r_i(k_i^\star)
-
\mathbf r_{i,d}(k_i^\star)
\right\|
\le \epsilon .
\end{equation}
Since
\vspace{-0.25cm}
\begin{equation}
\mathbf r_{i,d}(k_i^\star)
=
\sum_{a\in\mathcal A}
\lambda_{ia}^{\sigma}(k_i^\star)\,
\mathbf r_a(k_i^\star),
\end{equation}
it follows that
\vspace{-0.25cm}
\begin{equation}
\left\|
\mathbf r_i(k_i^\star)
-
\sum_{a\in\mathcal A}
\lambda_{ia}^{\sigma}(k_i^\star)\,
\mathbf r_a(k_i^\star)
\right\|
\le \epsilon .
\end{equation}
Equivalently,
\vspace{-0.25cm}
\begin{equation}
\bigwedge_{i\in\mathcal W_\sigma}
\Diamond_{[0,T]}
\left(
\left\|
\mathbf r_i(k)
-
\sum_{a\in\mathcal A}
\lambda_{ia}^{\sigma}(k)\,
\mathbf r_a(k)
\right\|
\le \epsilon
\right)
\end{equation}
holds. Finally, if
\[
\lim_{k\rightarrow\infty}
w_{ij}^{\sigma}(k)
=
\gamma_{ij}^{\sigma},
\qquad
j\in\mathcal N_{\sigma,i},
\]
then taking the limit of the recursion
\eqref{coreq} yields
\[
\lambda_{ia}^{\sigma}
=
\sum_{j\in\mathcal N_{\sigma,i}}
\gamma_{ij}^{\sigma}\,
\lambda_{ja}^{\sigma},
\qquad
a\in\mathcal A,
\]
which establishes the limiting anchor-coordinate recursion. This
completes the proof.
\end{proof}

Let $\mathcal A=\{a_1,\ldots,a_L\}$ and $W_\sigma=\{b_1,\ldots,b_{N_\sigma}\}$. Then,
\vspace{-0.25cm}
\begin{equation}
\mathbf W_\sigma(k)
=
\begin{bmatrix}
\mathbf B_\sigma(k) &
\mathbf A_\sigma(k)
\end{bmatrix}
\in
\mathbb R^{N_\sigma\times(L+N_\sigma)}
\end{equation}
defines the weighted communication matrix that satisfies:
\vspace{-0.25cm}
\begin{equation}
\left(\mathbf W_\sigma(k)\otimes \mathbf I_3\right)
\begin{bmatrix}
\mathbf r_{\mathcal A}(k)\\
\mathbf r_{\mathcal W_\sigma}(k)
\end{bmatrix}
=
\mathbf 0,
\end{equation}
where
\vspace{-0.25cm}
\begin{equation}
\mathbf r_{\mathcal A}
=
\begin{bmatrix}
\mathbf r_{a_1}^\top &
\cdots &
\mathbf r_{a_L}^\top
\end{bmatrix}^{\!\top},
\qquad
\mathbf r_{\mathcal W_\sigma}
=
\begin{bmatrix}
\mathbf r_{b_1}^\top &
\cdots &
\mathbf r_{b_{N_\sigma}}^\top
\end{bmatrix}^{\!\top}.
\end{equation}
The entries of $\mathbf W_\sigma(k)$ are
\vspace{-0.25cm}
\begin{equation}
[\mathbf W_\sigma(k)]_{ij}
=
\begin{cases}
w_{b_i a_j}^{\sigma}(k),
&
j\le L,\; a_j\in\mathcal N_{\sigma,b_i},
\\[1mm]
-1,
&
j=L+i,
\\[1mm]
w_{b_i b_{j-L}}^{\sigma}(k),
&
j>L,\; b_{j-L}\in\mathcal N_{\sigma,b_i},
\\[1mm]
0,
&
\text{otherwise}.
\end{cases}
\end{equation}

Since $w_{ij}^{\sigma}(k)$ is constant at ever $k\geq k_\sigma^{\rm mid}$,
\vspace{-0.25cm}
 \begin{equation}
\mathbf A_\sigma
=
\mathbf A_\sigma(k_\sigma^{\rm mid}),
\qquad
\mathbf B_\sigma
=
\mathbf B_\sigma(k_\sigma^{\rm mid}),
\end{equation}
are constant matrices.

\vspace{-0.25cm}
\begin{theorem}
\label{thm:Hmatrix}
Suppose the inter-agent communication graph is generated by the DNN
construction of Section~\ref{Worker-Agent Layers}. Then:
\begin{enumerate}
\item
There exists a permutation matrix
$\mathbf P_\sigma$
such that
\vspace{-0.25cm}
\[
\tilde{\mathbf A}_\sigma
=
\mathbf P_\sigma
\mathbf A_\sigma
\mathbf P_\sigma^\top
\]
is lower triangular with
\vspace{-0.25cm}
\[
[\tilde{\mathbf A}_\sigma]_{ii}
=
-1,
\qquad
i=1,\ldots,N_\sigma .
\]
Consequently, $\mathbf A_\sigma$ is nonsingular and Hurwitz.
\item
The matrix
\vspace{-0.25cm}
\begin{equation}
\mathbf H_\sigma
=
-\mathbf A_\sigma^{-1}\mathbf B_\sigma
\end{equation}
is nonnegative and row stochastic, i.e.,
\begin{equation}
\mathbf H_\sigma\ge0,
\qquad
\mathbf H_\sigma\mathbf 1_L
=
\mathbf 1_{N_\sigma}.
\end{equation}

\item
The $i$th row of $\mathbf H_\sigma$ contains the final
anchor-coordinate weights of worker agent $b_i$.
\end{enumerate}
\end{theorem}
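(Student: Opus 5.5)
The plan is to order the workers by layer and read everything off that ordering. Let $\mathbf P_\sigma$ be the permutation that lists the workers of $\mathcal V_{\sigma,1}$ first, then those of $\mathcal V_{\sigma,2}$, and so on up to $\mathcal V_{\sigma,m_\sigma}$, in any order inside each group. We already know that $[\mathbf A_\sigma]_{ii}=-1$ by the definition of $\mathbf W_\sigma$. The entry $[\mathbf A_\sigma]_{ij}$ with $j\neq i$ is nonzero only if $b_j\in\mathcal N_{\sigma,b_i}$. By \eqref{strictlylayered}, this forces $b_j\in\mathcal L_{\sigma,l-1}$ whenever $b_i\in\mathcal V_{\sigma,l}$, so $b_j$ lies in a strictly earlier group.

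In the permuted ordering, every nonzero off-diagonal entry of $\tilde{\mathbf A}_\sigma=\mathbf P_\sigma\mathbf A_\sigma\mathbf P_\sigma^\top$ therefore sits strictly below the diagonal. Since there are no intra-group links, the matrix is even block lower triangular with $-\mathbf I$ diagonal blocks. So $\tilde{\mathbf A}_\sigma$ is lower triangular with unit $-1$ diagonal, which gives $\det\mathbf A_\sigma=(-1)^{N_\sigma}\neq0$. A permutation similarity preserves eigenvalues, so every eigenvalue of $\mathbf A_\sigma$ equals $-1$ and $\mathbf A_\sigma$ is Hurwitz. This settles item 1.

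For item 2, write $\tilde{\mathbf A}_\sigma=-(\mathbf I-\mathbf C)$, where $\mathbf C\ge0$ is strictly lower triangular and hence nilpotent with $\mathbf C^{m_\sigma}=\mathbf 0$. The weights entering $\mathbf C$ are $\gamma_{ij}^\sigma\ge0$, because $\mathbf A_\sigma,\mathbf B_\sigma$ are evaluated at $k\ge k_\sigma^{\rm mid}$. Then
\[
-\tilde{\mathbf A}_\sigma^{-1}=(\mathbf I-\mathbf C)^{-1}=\sum_{p=0}^{m_\sigma-1}\mathbf C^p\ge0 .
\]
Since $\mathbf B_\sigma\ge0$, it follows that $\mathbf H_\sigma=-\mathbf A_\sigma^{-1}\mathbf B_\sigma\ge0$.

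For row stochasticity, convexity of the weights says that each row of $\mathbf W_\sigma$ sums to zero, i.e. $\mathbf B_\sigma\mathbf 1_L+\mathbf A_\sigma\mathbf 1_{N_\sigma}=\mathbf 0$. Multiplying by $-\mathbf A_\sigma^{-1}$ gives $\mathbf H_\sigma\mathbf 1_L=\mathbf 1_{N_\sigma}$.

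For item 3, I will stack the limiting recursion of Theorem~\ref{thm:recursive_anchor_coordinates} into a matrix identity. Let $\boldsymbol\Lambda\in\mathbb R^{N_\sigma\times L}$ collect the $\lambda^\sigma_{b_i a}$. The anchor initialization contributes $\mathbf B_\sigma$ exactly, because $\lambda_{ab}^\sigma=\delta_{ab}$. The worker–worker terms contribute $(\mathbf A_\sigma+\mathbf I)\boldsymbol\Lambda$. The recursion therefore reads $\boldsymbol\Lambda=\mathbf B_\sigma+(\mathbf A_\sigma+\mathbf I)\boldsymbol\Lambda$, that is, $\mathbf A_\sigma\boldsymbol\Lambda=-\mathbf B_\sigma$. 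Invertibility from item 1 then gives $\boldsymbol\Lambda=\mathbf H_\sigma$ uniquely.

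The main obstacle is a modeling point rather than anything computational. The argument needs the DNN layer ordering to exclude same-layer neighbors, which \eqref{strictlylayered} provides. It also needs the nonnegativity of the $\gamma^\sigma_{ij}$, which is imposed in the final-weight construction. Once both are in place, the nilpotent Neumann series carries the proof.
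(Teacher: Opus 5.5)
Your proof is correct. Item~1 coincides with the paper's argument: layer-consistent ordering, a unit $-1$ diagonal, and all eigenvalues equal to $-1$. Items~2 and~3, however, take a genuinely different route.

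The paper first writes $\mathbf r_{\mathcal W_\sigma}=(\mathbf H_\sigma\otimes\mathbf I_3)\mathbf r_{\mathcal A}$. It then invokes Theorem~\ref{thm:recursive_anchor_coordinates}, which says each worker position is a convex combination of anchor positions, to conclude $\mathbf H_\sigma\ge0$ and $\mathbf H_\sigma\mathbf 1_L=\mathbf 1_{N_\sigma}$. It obtains item~3 by comparing two position-level representations.

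You instead stay at the level of coefficients:
\begin{enumerate}
\item The nilpotent Neumann series gives nonnegativity. As a bonus, $[\mathbf H_\sigma]_{ij}$ becomes a sum, over layered paths from anchor $a_j$ to worker $b_i$, of products of weights.
\item The zero row sums of $\mathbf W_\sigma$ give row stochasticity.
\item Stacking the limiting recursion yields $\mathbf A_\sigma\boldsymbol\Lambda=-\mathbf B_\sigma$, hence $\boldsymbol\Lambda=\mathbf H_\sigma$.
\end{enumerate}

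The paper's route is shorter because it recycles Theorem~\ref{thm:recursive_anchor_coordinates}. But it tacitly identifies $\mathbf H_\sigma$ with the convex coefficients $\lambda^\sigma_{b_i a}$ from an equality of positions, and that requires the anchor representation to be unique. The $L\ge4$ anchors (the boundary agents plus the core agent $b_c$) all sit at altitude $z_0$, so they cannot be affinely independent, and the identification is not automatic. Your identity $\mathbf A_\sigma\boldsymbol\Lambda=-\mathbf B_\sigma$ supplies exactly this missing link. Your argument also holds for any nonnegative, unit-sum weights on a strictly layered graph, independent of the reference geometry.
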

\vspace{-0.25cm}
\begin{proof}
Because the DNN communication graph is strictly layered, every worker
agent communicates only with anchor agents or worker agents belonging to
preceding layers. Consequently, if
$b_j\in\mathcal N_{\sigma,b_i}$, then necessarily $j<i$ after a suitable
layer-consistent ordering of the worker agents. Therefore,
$\mathbf A_\sigma$ is permutation similar to a lower triangular matrix
with diagonal entries equal to $-1$. Its eigenvalues are all equal to
$-1$, implying that $\mathbf A_\sigma$ is nonsingular and Hurwitz.

At the final communication weights, the barycentric relations of all
worker agents can be written compactly as
\vspace{-0.25cm}
\begin{equation}
\left(\mathbf B_\sigma\otimes \mathbf I_3\right)
\mathbf r_{\mathcal A}
+
\left(\mathbf A_\sigma\otimes \mathbf I_3\right)
\mathbf r_{\mathcal W_\sigma}
=
\mathbf 0,
\end{equation}
where $\mathbf r_{\mathcal A}
\in \mathbb R^{3L}$ and $\mathbf r_{\mathcal W_\sigma}
\in \mathbb R^{3N_\sigma}$
are the stacked three-dimensional anchor and worker position vectors.
Since $\mathbf A_\sigma$ is nonsingular,
\vspace{-0.25cm}
\begin{equation}
\mathbf r_{\mathcal W_\sigma}
=
-
\left(\mathbf A_\sigma^{-1}\mathbf B_\sigma\otimes \mathbf I_3\right)
\mathbf r_{\mathcal A}
=
\left(\mathbf H_\sigma\otimes \mathbf I_3\right)
\mathbf r_{\mathcal A}.
\end{equation}
By Theorem~\ref{thm:recursive_anchor_coordinates}, each worker position is
a convex combination of the anchor positions. Hence, every
anchor-coordinate weight is nonnegative and the weights associated with
each worker sum to one. Therefore,
\vspace{-0.25cm}
\[
\mathbf H_\sigma\ge0,
\qquad
\mathbf H_\sigma\mathbf 1_L
=
\mathbf 1_{N_\sigma},
\]
which proves that $\mathbf H_\sigma$ is row stochastic. 
Finally, comparing
\vspace{-0.25cm}
\[
\mathbf r_{\mathcal W_\sigma}
=
\left(\mathbf H_\sigma\otimes \mathbf I_3\right)
\mathbf r_{\mathcal A}
\]
with
\vspace{-0.25cm}
\[
\mathbf p_{b_i,\sigma}
=
\sum_{a\in\mathcal A}
\alpha_{b_i a}^{\sigma}
\mathbf a_{a,\sigma},
\]
shows that
\vspace{-0.25cm}
\[
[\mathbf H_\sigma]_{ij}
=
\alpha_{b_i a_j}^{\sigma},
\]
i.e., the $i$th row of $\mathbf H_\sigma$ contains the final
anchor-coordinate weights of worker agent $b_i$. 
\end{proof}
\section{Network Dynamics and Control}\label{Network Dynamics and Control}

Worker agent $i\in\mathcal W_\sigma$ is considered to be a multi-copter UAS and  modeled by
\vspace{-0.25cm}
\begin{equation}\label{orginaldyn}
\begin{cases}
\dot{\mathbf r}_i=\mathbf v_i,\\
\dot{\mathbf v}_i
=
-g\hat{\mathbf e}_3
+\dfrac{mg+T_i}{m}
\mathbf R(\boldsymbol{\eta}_i)\hat{\mathbf e}_3,\\
\dot{\boldsymbol{\eta}}_i
=
\mathbf E(\phi_i,\theta_i)\boldsymbol{\omega}_i,\\
\mathbf I\dot{\boldsymbol{\omega}}_i
+
\boldsymbol{\omega}_i\times(\mathbf I\boldsymbol{\omega}_i)
=
\boldsymbol{\tau}_i,\\
\dot T_i=\dot T_i,\\
\ddot T_i=u_{T,i},
\end{cases}
\end{equation}
where
$\mathbf r_i,\mathbf v_i\in\mathbb R^3$
denote the inertial position and velocity,
$\boldsymbol{\eta}_i=(\phi_i,\theta_i,\psi_i)$
are the Euler angles,
$\boldsymbol{\omega}_i=(p_i,q_i,r_i)$
is the body angular velocity,
$T_i$ denotes the thrust deviation from hover,
$\boldsymbol{\tau}_i=(\tau_{\phi,i},\tau_{\theta,i},\tau_{\psi,i})^\top$
is the control torque,
$\hat{\mathbf e}_3=[0~0~1]^\top$,
and
$\mathbf E(\phi_i,\theta_i)$
is the Euler kinematic matrix.
\subsection{Nonlinear Control-Affine Dynamics}
Differentiating
\vspace{-0.25cm}
\begin{equation}
\dot{\boldsymbol{\eta}}_i
=
\mathbf E(\phi_i,\theta_i)\boldsymbol{\omega}_i
\end{equation}
gives
\vspace{-0.25cm}
\begin{equation}
\ddot{\boldsymbol{\eta}}_i
=
\mathbf E(\phi_i,\theta_i)\dot{\boldsymbol{\omega}}_i
+
\dot{\mathbf E}(\phi_i,\theta_i)\boldsymbol{\omega}_i.
\end{equation}
Define the virtual attitude input
\vspace{-0.25cm}
\begin{equation}
\mathbf u_{\eta,i}
=
\begin{bmatrix}
u_{\phi,i}\\
u_{\theta,i}\\
u_{\psi,i}
\end{bmatrix}
=
\ddot{\boldsymbol{\eta}}_i.
\end{equation}
Solving for
$\dot{\boldsymbol{\omega}}_i$
yields
\vspace{-0.25cm}
\begin{equation}
\dot{\boldsymbol{\omega}}_i
=
\mathbf E^{-1}(\phi_i,\theta_i)
\left(
\mathbf u_{\eta,i}
-
\dot{\mathbf E}(\phi_i,\theta_i)\boldsymbol{\omega}_i
\right),
\end{equation}
and substituting into the rigid-body dynamics gives
\vspace{-0.25cm}
\begin{equation}
\boldsymbol{\tau}_i
=
\mathbf I
\mathbf E^{-1}(\phi_i,\theta_i)\mathbf u_{\eta,i}
-
\mathbf I
\mathbf E^{-1}(\phi_i,\theta_i)
\dot{\mathbf E}(\phi_i,\theta_i)\boldsymbol{\omega}_i
+
\boldsymbol{\omega}_i\times(\mathbf I\boldsymbol{\omega}_i).
\end{equation}
Define the state and input vectors
\vspace{-0.25cm}
\begin{equation}
\mathbf x_i
=
\begin{bmatrix}
\mathbf r_i^\top&
\mathbf v_i^\top&
\boldsymbol{\eta}_i^\top&
\dot{\boldsymbol{\eta}}_i^\top&
T_i&
\dot T_i
\end{bmatrix}^{\!\top}
\in\mathbb R^{14},
\end{equation}
\begin{equation}
\mathbf u_i
=
\begin{bmatrix}
u_{T,i}&
\mathbf u_{\eta,i}^{\top}
\end{bmatrix}^{\!\top}
\in\mathbb R^4.
\end{equation}

Using
$\boldsymbol{\omega}_i
=
\mathbf E^{-1}(\phi_i,\theta_i)\dot{\boldsymbol{\eta}}_i$,
the dynamics admit the control-affine form
\vspace{-0.25cm}
\begin{equation}
\resizebox{0.99\hsize}{!}{%
$\dot{\mathbf x}_i
=
\begin{bmatrix}
\mathbf v_i\\
-\,g\hat{\mathbf e}_3
+\dfrac{mg+T_i}{m}
\mathbf R(\boldsymbol{\eta}_i)\hat{\mathbf e}_3\\
\dot{\boldsymbol{\eta}}_i\\
\mathbf 0_3\\
\dot T_i\\
0
\end{bmatrix}
+
\begin{bmatrix}
\mathbf v_i\\
-\,g\hat{\mathbf e}_3
+\dfrac{mg+T_i}{m}
\mathbf R(\boldsymbol{\eta}_i)\hat{\mathbf e}_3\\
\dot{\boldsymbol{\eta}}_i\\
\mathbf 0_3\\
\dot T_i\\
0
\end{bmatrix}\mathbf u_i.
$
}
\end{equation}
The multi-copter dynamics in \eqref{orginaldyn} are differentially flat with flat outputs given by the inertial position
$\mathbf r_i=[x_i~y_i~z_i]^\top$
and yaw angle
$\psi_i$.
Consequently, all system states and control inputs, including the roll and pitch angles, body angular velocity, thrust, and control torques, can be expressed as algebraic functions of the flat outputs and a finite number of their time derivatives. This property enables the decomposition of the vehicle dynamics into external dynamics and internal attitude dynamics. The external dynamics govern the evolution of the flat outputs and are used for network-level coordination and trajectory generation, while the internal dynamics are responsible for realizing the corresponding thrust and attitude commands.

Accordingly, define the external state as
\vspace{-0.25cm}
\begin{equation}
\mathbf z_i
=
\begin{bmatrix}
\mathbf r_i^\top &
\dot{\mathbf r}_i^\top &
\ddot{\mathbf r}_i^\top &
\dddot{\mathbf r}_i^\top &
\psi_i &
\dot{\psi}_i
\end{bmatrix}^{\top},
\end{equation}
and the external input as
\vspace{-0.25cm}
\begin{equation}
\mathbf s_i
=
\begin{bmatrix}
\ddddot{\mathbf r}_i\\
\ddot{\psi}_i
\end{bmatrix}.
\end{equation}
The external dynamics are therefore described by the linear state-space model
\vspace{-0.25cm}
\begin{equation}
\dot{\mathbf z}_i
=
\mathbf A_{\rm EXT}\mathbf z_i
+
\mathbf B_{\rm EXT}\mathbf s_i,
\end{equation}
where the matrices
$\mathbf A_{\rm EXT}$
and
$\mathbf B_{\rm EXT}$
are identical to those derived in \cite{el2023quadcopter,rastgoftar2026deep}.

The translational subsystem forms a fourth-order chain of integrators
driven by snap, while the yaw subsystem is second order. The external
feedback law is selected as
\vspace{-0.25cm}
\begin{equation}
\resizebox{0.99\hsize}{!}{%
$
\begin{aligned}
\mathbf s_i
=&
-k_{13,i}\dot\psi_i
-k_{14,i}\psi_i
+\mathbf K_{jerk,i}
(\dddot{\mathbf r}_{i,d}-\mathbf j_i)\\
&
+\mathbf K_{acc,i}
(\ddot{\mathbf r}_{i,d}-\mathbf a_i)
+\mathbf K_{vel,i}
(\dot{\mathbf r}_{i,d}-\mathbf v_i)
+\mathbf K_{pos,i}
(\mathbf r_{i,d}-\mathbf r_i),
\end{aligned}
$
}
\end{equation}
where
$\mathbf K_{jerk,i}$,
$\mathbf K_{acc,i}$,
$\mathbf K_{vel,i}$,
and
$\mathbf K_{pos,i}$
are positive-definite diagonal gain matrices. The gain bounds are selected such that the resulting closed-loop
external dynamics are asymptotically stable. Differentiating the translational dynamics twice yields the
affine relation
\vspace{-0.25cm}
\begin{equation}
\mathbf s_i
=
\mathbf M(\mathbf x_i)\mathbf u_i
+
\mathbf n(\mathbf x_i),
\end{equation}
where
\vspace{-0.25cm}
\begin{equation}
\mathbf M(\mathbf x_i)
=
\begin{bmatrix}
\mathbf M'(\mathbf x_i)\\
0\;\;0\;\;0\;\;1
\end{bmatrix},
\qquad
\mathbf n(\mathbf x_i)
=
\begin{bmatrix}
\mathbf n'(\mathbf x_i)\\
0
\end{bmatrix},
\end{equation}
with
$\mathbf M'(\mathbf x_i)$
and
$\mathbf n'(\mathbf x_i)$
obtained from the expressions in \cite{el2023quadcopter} by replacing the
state variables with those of agent $i$.
\subsection{Network Dynamics}
By setting the desired yaw angle to zero, the yaw gains
$k_{13,i}$ and $k_{14,i}$ are selected independently for every
worker agent
$i\in\mathcal W_\sigma$
such that
$
\psi_i(t)\rightarrow 0$. Consequently, the translational and yaw dynamics become decoupled, and the
network dynamics can be expressed solely in terms of the translational
positions.

Define the stacked worker-position vector
\vspace{-0.25cm}
\begin{equation}
\mathbf y
=
\begin{bmatrix}
\mathbf r_{b_1}^\top&
\mathbf r_{b_2}^\top&
\cdots&
\mathbf r_{b_{N_\sigma}}^\top
\end{bmatrix}^{\!\top}
\in\mathbb R^{2N_\sigma},
\end{equation}
and the stacked anchor-position vector
\vspace{-0.25cm}
\begin{equation}
\mathbf y_L
=
\left(
\mathbf B_\sigma
\otimes
\mathbf I_2
\right)
\begin{bmatrix}
\mathbf r_{a_1}^\top&
\cdots&
\mathbf r_{a_L}^\top
\end{bmatrix}^{\!\top}
\in\mathbb R^{2N_\sigma},
\end{equation}
where $\otimes$ denotes the Kronecker product. Furthermore, define
\vspace{-0.25cm}
\begin{equation}
\mathbf A
=
\mathbf A_\sigma
\otimes
\mathbf I_3,
\end{equation}
and let $\mathbf K_{\rm pos}$, $\mathbf K_{\rm vel}$, $\mathbf K_{\rm acc}$, and $\mathbf K_{\rm jerk}$
denote the corresponding block-diagonal gain matrices. The closed-loop
network dynamics are then given by
\vspace{-0.25cm}
\begin{equation}
\ddddot{\mathbf y}
=
-
\mathbf K_{\rm jerk}\dddot{\mathbf y}
-
\mathbf K_{\rm acc}\ddot{\mathbf y}
-
\mathbf K_{\rm vel}\dot{\mathbf y}
+
\mathbf K_{\rm pos}
\left(
\mathbf A\mathbf y
+
\mathbf y_L
\right).
\end{equation}

\begin{theorem}\label{thm:network_convergence}
Consider mode $\sigma\in\mathcal R$ and suppose that the communication graph
is generated by the DNN construction, so that \eqref{picom} and \eqref{strictlylayered} hold. Let the closed-loop translational network
dynamics be
\vspace{-0.25cm}
\begin{equation}
\ddddot{\mathbf y}
=
-\mathbf K_{\rm jerk}\dddot{\mathbf y}
-\mathbf K_{\rm acc}\ddot{\mathbf y}
-\mathbf K_{\rm vel}\dot{\mathbf y}
+
\mathbf K_{\rm pos}
\left(
\mathbf A_\sigma(k)\mathbf y+\mathbf y_L
\right),
\end{equation}
where $\mathbf y$ is the stacked worker-position vector and $\mathbf y_L$
is the anchor-induced input. Assume that after
$k=k_\sigma^{\rm mid}$, the communication weights are constant and satisfy
$w_{ij}^{\sigma}=1/3$. Let
\vspace{-0.25cm}
\begin{equation}
\mathbf A_\sigma
=
\mathbf A_\sigma(k_\sigma^{\rm mid}).
\end{equation}
Assume further that the gain matrices are block diagonal and positive
definite:
\vspace{-0.25cm}
\begin{equation}
\mathbf K_{\rm pos}>0,\quad
\mathbf K_{\rm vel}>0,\quad
\mathbf K_{\rm acc}>0,\quad
\mathbf K_{\rm jerk}>0.
\end{equation}
If, for every eigenvalue $\mu_q$ of $\mathbf A_\sigma$, the polynomial
\vspace{-0.25cm}
\begin{equation}
s^4
+
k_{{\rm jerk},i}s^3
+
k_{{\rm acc},i}s^2
+
k_{{\rm vel},i}s
-
k_{{\rm pos},i}\mu_q
\end{equation}
is Hurwitz for every worker agent $i\in\mathcal W_\sigma$, then the
network dynamics are exponentially stable with respect to the equilibrium
\vspace{-0.25cm}
\begin{equation}
\mathbf y^\star
=
-\mathbf A_\sigma^{-1}\mathbf y_L.
\end{equation}
Consequently,
\[
\lim_{t\to\infty}\mathbf y(t)=\mathbf y^\star,~
\lim_{t\to\infty}\dot{\mathbf y}(t)=\mathbf 0,~
\lim_{t\to\infty}\ddot{\mathbf y}(t)=\mathbf 0,~
\lim_{t\to\infty}\dddot{\mathbf y}(t)=\mathbf 0.
\]
Moreover, the equilibrium satisfies
\begin{equation}
\mathbf y^\star
=
\mathbf H_\sigma\mathbf y_{\mathcal A},
\qquad
\mathbf H_\sigma=-\mathbf A_\sigma^{-1}\mathbf B_\sigma,
\end{equation}
and therefore every worker converges to the anchor-coordinate position
prescribed by the DNN communication topology.
\end{theorem}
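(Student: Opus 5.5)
Since the weights are frozen at $w_{ij}^\sigma=1/3$ for $k\ge k_\sigma^{\rm mid}$, I treat the system as linear time-invariant on that interval, with $\mathbf A_\sigma$ and $\mathbf B_\sigma$ constant. I first shift coordinates to the candidate equilibrium. By Theorem~\ref{thm:Hmatrix}, $\mathbf A_\sigma$ is nonsingular, so $\mathbf y^\star=-\mathbf A_\sigma^{-1}\mathbf y_L$ is well defined and satisfies $\mathbf A_\sigma\mathbf y^\star+\mathbf y_L=\mathbf 0$. Here $\mathbf A_\sigma$ is understood as Kronecker-lifted to the stacked dimension. Setting $\mathbf e=\mathbf y-\mathbf y^\star$, and using that the anchors are stationary by $\varphi_{\rm anc}$ (so $\mathbf y_L$ is constant), gives the homogeneous system
\[
\ddddot{\mathbf e}=-\mathbf K_{\rm jerk}\dddot{\mathbf e}-\mathbf K_{\rm acc}\ddot{\mathbf e}-\mathbf K_{\rm vel}\dot{\mathbf e}+\mathbf K_{\rm pos}\mathbf A_\sigma\mathbf e .
\]
In first-order form with state $\boldsymbol\xi=[\mathbf e^\top~\dot{\mathbf e}^\top~\ddot{\mathbf e}^\top~\dddot{\mathbf e}^\top]^\top$, the matrix is a block companion matrix $\mathcal M$. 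The task is therefore to show that $\mathcal M$ is Hurwitz.

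The key structural step uses part~1 of Theorem~\ref{thm:Hmatrix}. Under the layer-consistent permutation $\mathbf P_\sigma$, the matrix $\tilde{\mathbf A}_\sigma$ is lower triangular with diagonal $-1$. Each worker is one scalar agent per coordinate, and the gain matrices are (block) diagonal, so the permutation commutes with them. Applying the same permutation to every block of $\boldsymbol\xi$, and reordering the state agent-by-agent (and coordinate-by-coordinate), makes $\mathcal M$ block lower triangular. Its diagonal blocks are the $4\times4$ companion matrices of
\[
s^4+k_{{\rm jerk},i}s^3+k_{{\rm acc},i}s^2+k_{{\rm vel},i}s-k_{{\rm pos},i}[\tilde{\mathbf A}_\sigma]_{ii}.
\]
Here $[\tilde{\mathbf A}_\sigma]_{ii}=-1=\mu_q$ for every eigenvalue of $\mathbf A_\sigma$. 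The off-diagonal coupling $\mathbf K_{\rm pos}\tilde{\mathbf A}_\sigma$ lies strictly below the diagonal and only feeds position errors of earlier-layer agents into later ones. Hence the spectrum of $\mathcal M$ is the union of the roots of these polynomials, which are Hurwitz by hypothesis (equivalently by Routh--Hurwitz, given positive gains satisfying the stated condition). So $\mathcal M$ is Hurwitz, and there exist $c,\lambda>0$ with $\|\boldsymbol\xi(t)\|\le c e^{-\lambda t}\|\boldsymbol\xi(0)\|$. This gives exponential stability and the four limits.

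The main obstacle I expect is justifying this triangularization cleanly when the gains are heterogeneous across agents. One cannot simply diagonalize $\mathbf A_\sigma$ and decouple modes, because $\mathbf K_{\rm pos}$ need not commute with $\mathbf A_\sigma$, and $\mathbf A_\sigma$ has the single repeated eigenvalue $-1$ and may not be diagonalizable. My plan avoids both issues by exploiting triangularity directly rather than eigen-decomposition. Should a layer-wise argument be preferred, I would instead run an induction over layers. Layer-$1$ workers see only the constant anchors, so each is an isolated Hurwitz fourth-order system driven by a constant. Layer-$l$ workers are Hurwitz systems driven by exponentially decaying inputs from layer $l-1$, and the cascade of exponentially stable systems stays exponentially stable.

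Finally, the identification of the equilibrium follows from the definition $\mathbf y_L=(\mathbf B_\sigma\otimes\mathbf I)\mathbf y_{\mathcal A}$. This gives $\mathbf y^\star=-(\mathbf A_\sigma^{-1}\mathbf B_\sigma\otimes\mathbf I)\mathbf y_{\mathcal A}=(\mathbf H_\sigma\otimes\mathbf I)\mathbf y_{\mathcal A}$. By parts~2--3 of Theorem~\ref{thm:Hmatrix} and Theorem~\ref{thm:recursive_anchor_coordinates}, each row is the nonnegative, sum-to-one anchor-coordinate vector of the corresponding worker. Therefore each worker converges to the anchor-coordinate position prescribed by the DNN topology.
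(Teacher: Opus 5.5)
Your proof is correct and follows the paper's skeleton. Both arguments shift to $\mathbf y^\star=-\mathbf A_\sigma^{-1}\mathbf y_L$ using stationarity of the anchors, and write the error dynamics as the block-companion system $\dot{\mathbf z}=\mathbf F_\sigma\mathbf z$. Both then conclude exponential stability and identify $\mathbf y^\star=\mathbf H_\sigma\mathbf y_{\mathcal A}$ from $\mathbf y_L=(\mathbf B_\sigma\otimes\mathbf I)\mathbf y_{\mathcal A}$.

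The one genuine difference is how you show that $\mathbf F_\sigma$ is Hurwitz. The paper only asserts that since ``all modal characteristic polynomials'' are Hurwitz, so is $\mathbf F_\sigma$. Read literally, this presumes a decoupling along eigenvectors of $\mathbf A_\sigma$, which does not exist here. After the layer permutation, $\mathbf A_\sigma$ is $-\mathbf I$ plus a nonzero strictly lower-triangular part as soon as some worker listens to another worker, so it is not diagonalizable. Moreover, agent-dependent gains $k_{\cdot,i}$ do not commute with it.

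Your agent-by-agent reordering gives a block lower-triangular matrix with $4\times4$ companion diagonal blocks. Equivalently, $\det(s\mathbf I-\mathbf F_\sigma)=\det(s^4\mathbf I+s^3\mathbf K_{\rm jerk}+s^2\mathbf K_{\rm acc}+s\mathbf K_{\rm vel}-\mathbf K_{\rm pos}\mathbf A_\sigma)$, whose argument becomes triangular after permutation. This is what actually makes the stated hypothesis sufficient. It also makes explicit that, since every $\mu_q=-1$, the hypothesis amounts to $s^4+k_{{\rm jerk},i}s^3+k_{{\rm acc},i}s^2+k_{{\rm vel},i}s+k_{{\rm pos},i}$ being Hurwitz for each worker.

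One small wording fix: the permutation does not commute with heterogeneous gains. Rather, conjugating a block-diagonal gain matrix by the Kronecker-lifted permutation gives another block-diagonal matrix with relabeled blocks, which is all your argument needs.
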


\begin{proof}
Since the communication graph is generated by the DNN layering procedure,
the worker-worker communication block $\mathbf A_\sigma$ is permutation
similar to a lower triangular matrix with diagonal entries equal to $-1$.
Hence, $\mathbf A_\sigma$ is nonsingular and Hurwitz.

At $k\ge k_\sigma^{\rm mid}$, the communication weights are constant.
Therefore, the anchor-induced term $\mathbf y_L$ is constant because the
anchor agents are stationary. Define the equilibrium
\vspace{-0.25cm}
\begin{equation}
\mathbf y^\star
=
-\mathbf A_\sigma^{-1}\mathbf y_L.
\end{equation}
Then
\vspace{-0.25cm}
\begin{equation}
\mathbf A_\sigma\mathbf y^\star+\mathbf y_L=\mathbf 0.
\end{equation}
Let
\vspace{-0.25cm}
\begin{equation}
\tilde{\mathbf y}
=
\mathbf y-\mathbf y^\star.
\end{equation}
Substituting into the network dynamics gives the homogeneous error system
\vspace{-0.25cm}
\begin{equation}
\ddddot{\tilde{\mathbf y}}
=
-\mathbf K_{\rm jerk}\dddot{\tilde{\mathbf y}}
-\mathbf K_{\rm acc}\ddot{\tilde{\mathbf y}}
-\mathbf K_{\rm vel}\dot{\tilde{\mathbf y}}
+
\mathbf K_{\rm pos}\mathbf A_\sigma\tilde{\mathbf y}.
\end{equation}

Equivalently, defining
\vspace{-0.25cm}
\begin{equation}
\mathbf z
=
\begin{bmatrix}
\tilde{\mathbf y}^{\top}&
\dot{\tilde{\mathbf y}}^{\top}&
\ddot{\tilde{\mathbf y}}^{\top}&
\dddot{\tilde{\mathbf y}}^{\top}
\end{bmatrix}^{\top},
\end{equation}
the error dynamics can be written as
\vspace{-0.25cm}
\begin{equation}
\dot{\mathbf z}
=
\mathbf F_\sigma\mathbf z,
\end{equation}
where
\vspace{-0.25cm}
\begin{equation}
\mathbf F_\sigma
=
\begin{bmatrix}
\mathbf 0 & \mathbf I & \mathbf 0 & \mathbf 0\\
\mathbf 0 & \mathbf 0 & \mathbf I & \mathbf 0\\
\mathbf 0 & \mathbf 0 & \mathbf 0 & \mathbf I\\
\mathbf K_{\rm pos}\mathbf A_\sigma
&
-\mathbf K_{\rm vel}
&
-\mathbf K_{\rm acc}
&
-\mathbf K_{\rm jerk}
\end{bmatrix}.
\end{equation}
By assumption, all modal characteristic polynomials associated with
$\mathbf F_\sigma$ are Hurwitz. Therefore, $\mathbf F_\sigma$ is Hurwitz,
and the origin of the error system is exponentially stable. Hence,
\vspace{-0.25cm}
\begin{equation}
\lim_{t\to\infty}\tilde{\mathbf y}(t)=\mathbf 0,
~
\lim_{t\to\infty}\dot{\tilde{\mathbf y}}(t)=\mathbf 0,
~
\lim_{t\to\infty}\ddot{\tilde{\mathbf y}}(t)=\mathbf 0,
~
\lim_{t\to\infty}\dddot{\tilde{\mathbf y}}(t)=\mathbf 0.
\end{equation}
Thus,
\vspace{-0.25cm}
\begin{equation}
\lim_{t\to\infty}\mathbf y(t)=\mathbf y^\star.
\end{equation}

Finally, from the steady-state communication relation
\begin{equation}
\mathbf B_\sigma\mathbf y_{\mathcal A}
+
\mathbf A_\sigma\mathbf y^\star
=
\mathbf 0,
\end{equation}
we obtain
\begin{equation}
\mathbf y^\star
=
-\mathbf A_\sigma^{-1}\mathbf B_\sigma\mathbf y_{\mathcal A}
=
\mathbf H_\sigma\mathbf y_{\mathcal A}.
\end{equation}
This proves convergence to the DNN-induced anchor-coordinate configuration.
\end{proof}
\section{Information-Theoretic Coverage Augmentation}
\label{Information-Theoretic Coverage Augmentation}

Assume that the anchor-agent set $\mathcal A$, the worker-agent sets
$\{\mathcal W_\sigma\}_{\sigma\in\mathcal R}$, and the mode-transition graph
$\mathcal{G}_{\mathrm{mode}}$ are given. For each mode transition
$\sigma\rightarrow\sigma^+$, the final desired reference positions of the
persistent worker agents, defined by $
i\in
\mathcal W_\sigma
\cap
\mathcal W_{\sigma^+}
$
are available, whereas the reference positions of the newly introduced worker
agents, defined by $\mathcal W_{\sigma^+}
\setminus
\mathcal W_\sigma$
must be determined. Let $
P
=
|\mathcal W_{\sigma^+}
\setminus
\mathcal W_\sigma|
$
denote the number of newly introduced worker agents. The objective is to
select their reference positions so that the spatial distribution of the
augmented worker-agent set closely matches the spatial distribution of the
surveillance nodes.

Recall that the surveillance region is discretized into the surveillance-node 
set $\mathcal{D}$,
where $\mathbf d_i\in\mathbb R^2$ denotes the location of the $i\in \mathcal{D}$th surveillance node.
The desired surveillance-node distribution is modeled by the discrete
probability distribution
\begin{equation}
\rho_D(i)
=
\frac{1}{M},
\qquad
i=1,\ldots,M,
\end{equation}
which assigns equal probability to every surveillance node.

The spatial distribution induced by the
reference positions of the worker agents is approximated by the Gaussian kernel
density estimator
\vspace{-0.25cm}
\begin{equation}
\rho_A(i)
=
\frac{
\displaystyle
\sum_{j\in\mathcal W_{\sigma^+}}
K_h
\!\left(
d_i-\mathbf a_j
\right)
}{
\displaystyle
\sum_{\ell=1}^{M}
\sum_{j\in\mathcal W_{\sigma^+}}
K_h
\!\left(
d_\ell-\mathbf a_j
\right)
},
\end{equation}
where
\vspace{-0.25cm}
\begin{equation}
K_h(\mathbf x)
=
\exp
\left(
-\frac{\|\mathbf x\|^2}{2h^2}
\right)
\end{equation}
is a Gaussian kernel with bandwidth $h>0$.

The mismatch between the surveillance-node distribution and the induced
worker-agent distribution is quantified by the Kullback--Leibler divergence
\vspace{-0.25cm}
\begin{equation}
D_{\rm KL}
(\rho_D\|\rho_A)
=
\sum_{i=1}^{M}
\rho_D(i)
\log
\frac{
\rho_D(i)
}{
\rho_A(i)+\epsilon
},
\end{equation}
where $\epsilon>0$ is a small regularization constant introduced to avoid
numerical singularities.

Rather than solving a continuous nonlinear optimization problem, the proposed
algorithm constructs the augmented worker-agent configuration sequentially.
During each iteration, every surveillance node is treated as a candidate
reference position for the next newly introduced worker agent. For each
candidate location, the worker-agent density is recomputed after temporarily
adding the candidate to the current augmented worker-agent set, and the
corresponding Kullback--Leibler divergence is evaluated. The candidate yielding
the smallest divergence is selected as the next reference position. The
selected worker agent is then permanently added to the augmented worker-agent
set, and the procedure is repeated until all $P$ newly introduced worker agents
have been placed.

Algorithmically, the reference position of the $p$th newly introduced worker
agent is determined by
\vspace{-0.25cm}
\begin{equation}
\mathbf a_p
=
\arg\min_{d_i\in\mathcal D}
D_{\rm KL}
\!\left(
\rho_D
\big\|
\rho_A^{(p,i)}
\right),
\end{equation}
where $\rho_A^{(p,i)}$ denotes the worker-agent density obtained by
temporarily assigning the $p$th worker agent to the candidate surveillance node
$d_i$ while keeping the remaining worker-agent positions unchanged.

\begin{table}[t]
\centering
\caption{Active and replaced agents for each operation mode.}
\label{tab:rotation_modes}
\begin{tabular}{|c|c|c|}
\hline
Mode $\sigma$ &
Active agents $\mathcal{V}_\sigma$ &
Replaced agents $\mathcal{V}\setminus\mathcal{V}_\sigma$ \\
\hline
1,~5&
$\{1,\ldots,14\}$ &
$\{15,16,17\}$ \\
\hline
2,~6 &
$\{1,\ldots,11,15,16,17\}$ &
$\{12,13,14\}$ \\
\hline
3,~7 &
$\{1,\ldots,8,12,\ldots,17\}$ &
$\{9,10,11\}$ \\
\hline
4,~8 &
$\{1,\ldots,5,9,\ldots,17\}$ &
$\{6,7,8\}$ \\
\hline
\end{tabular}
\end{table}

\begin{figure}[ht]
\center
\includegraphics[width=3.3 in]{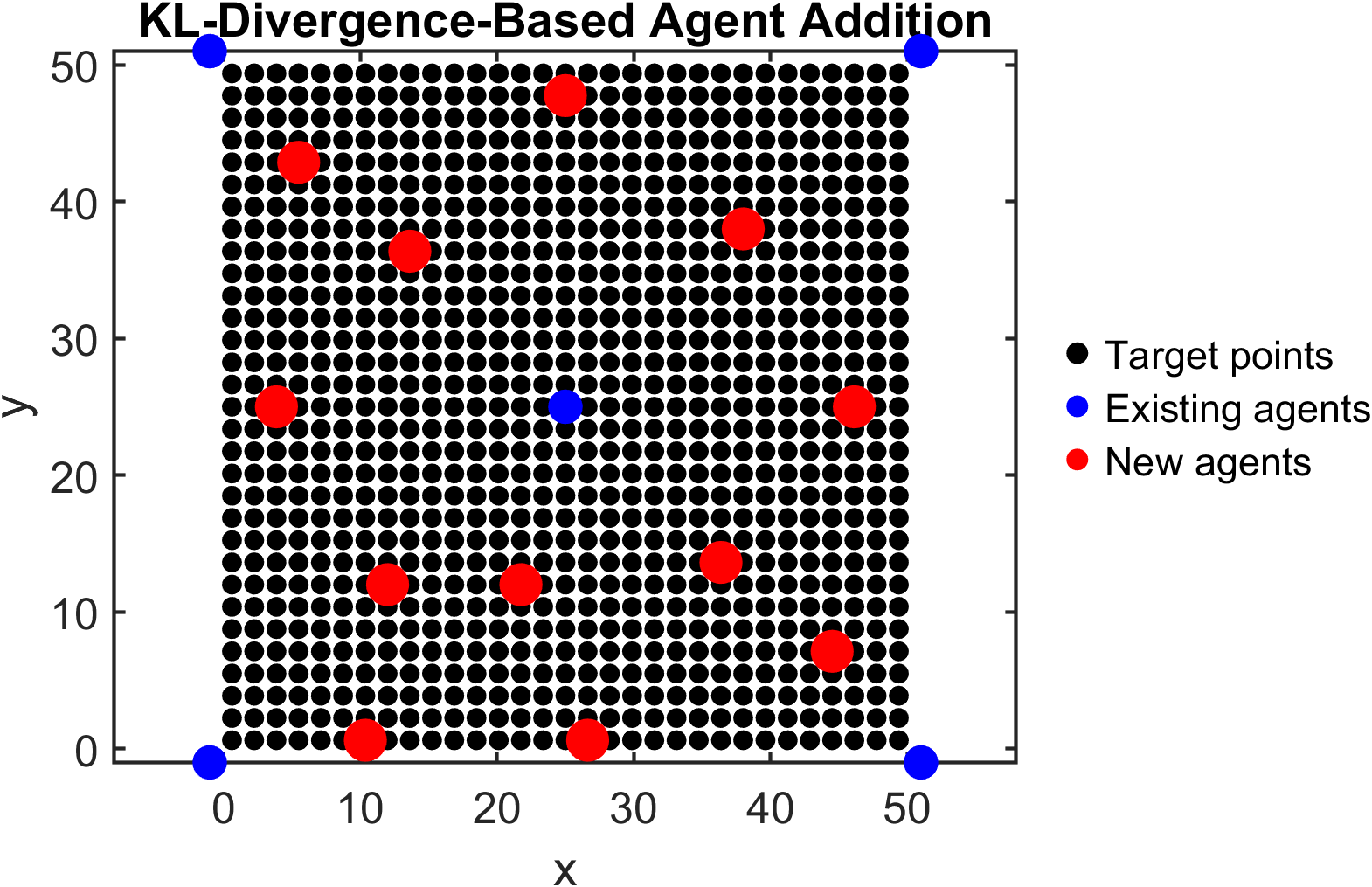}
\vspace{-.4cm}
\caption{Initial reference configuration for the surveillance mission.
Black dots denote the discretized surveillance nodes comprising the initial
unsearched set
\(\mathcal U_1=\mathcal D\),
while red markers indicate the reference positions of the worker UAS obtained
using the proposed KL divergence-based initialization method. The anchor UAS
remain fixed throughout the mission and define the reference geometry for
distributed coordination.}
\label{KL_initial}
\end{figure} 
\begin{figure*}[h]
\centering
 \subfigure[]{\includegraphics[width=0.24\linewidth]{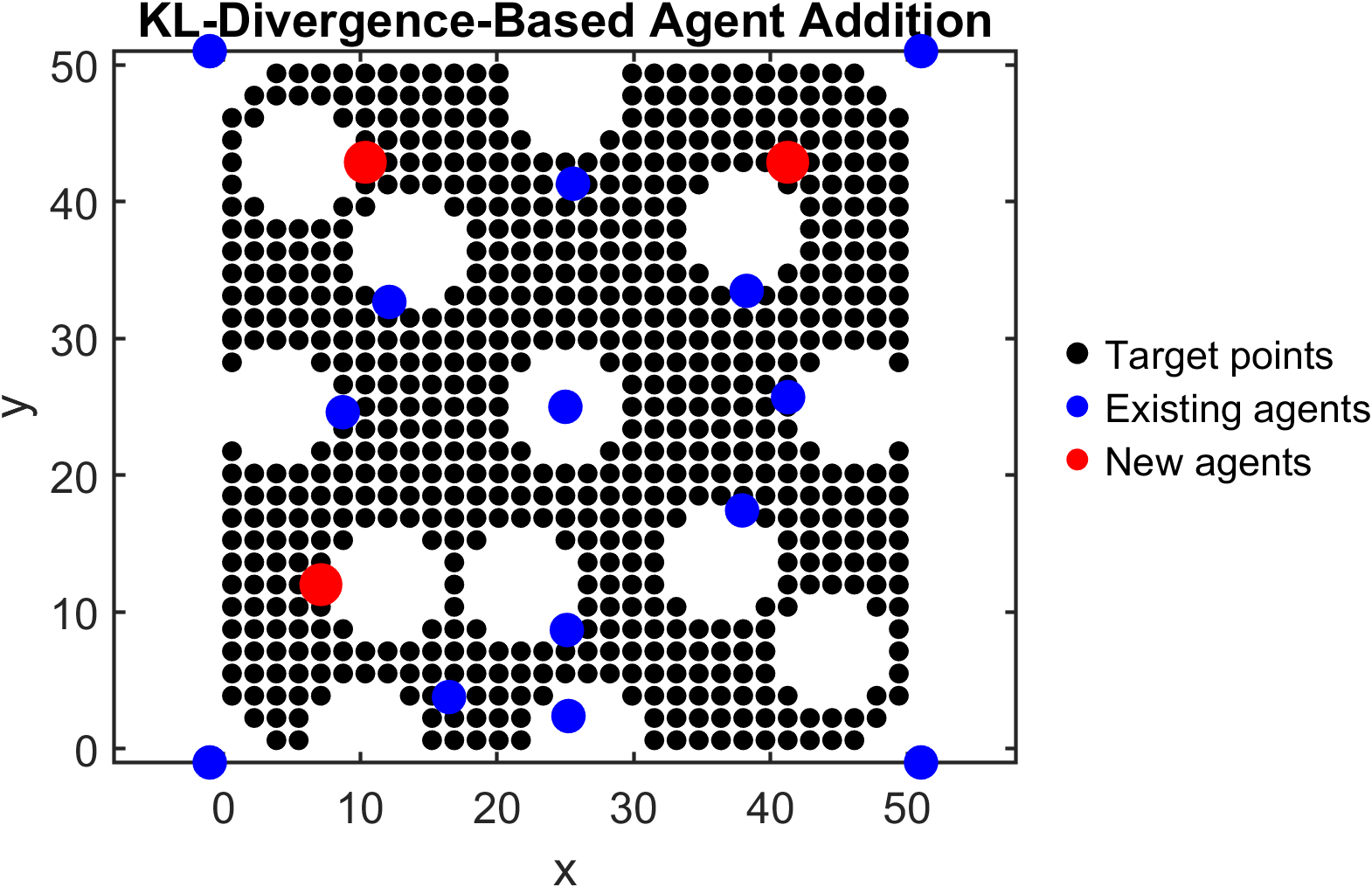}}
 \subfigure[]{\includegraphics[width=0.24\linewidth]{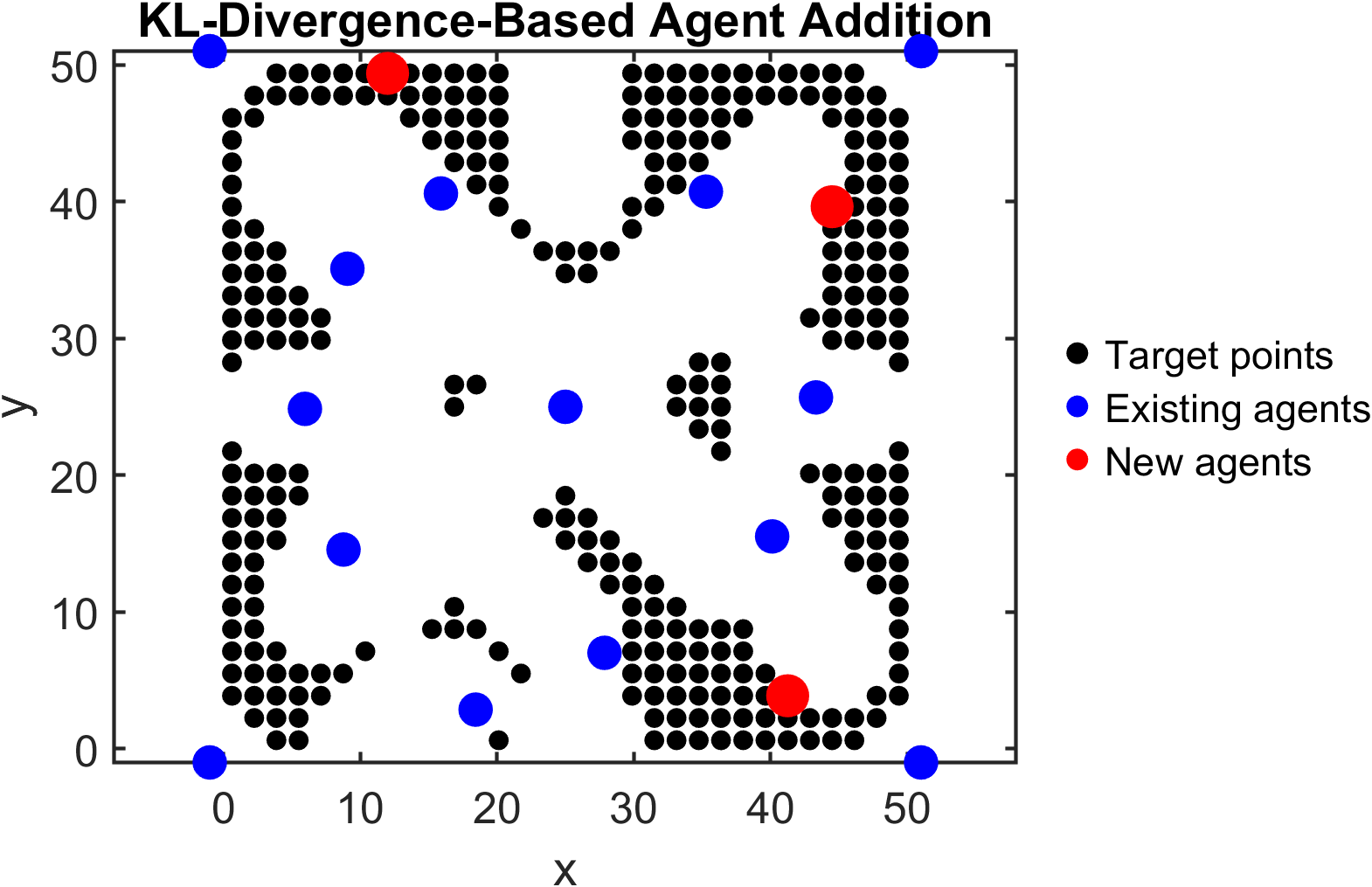}}
 \subfigure[]{\includegraphics[width=0.24\linewidth]{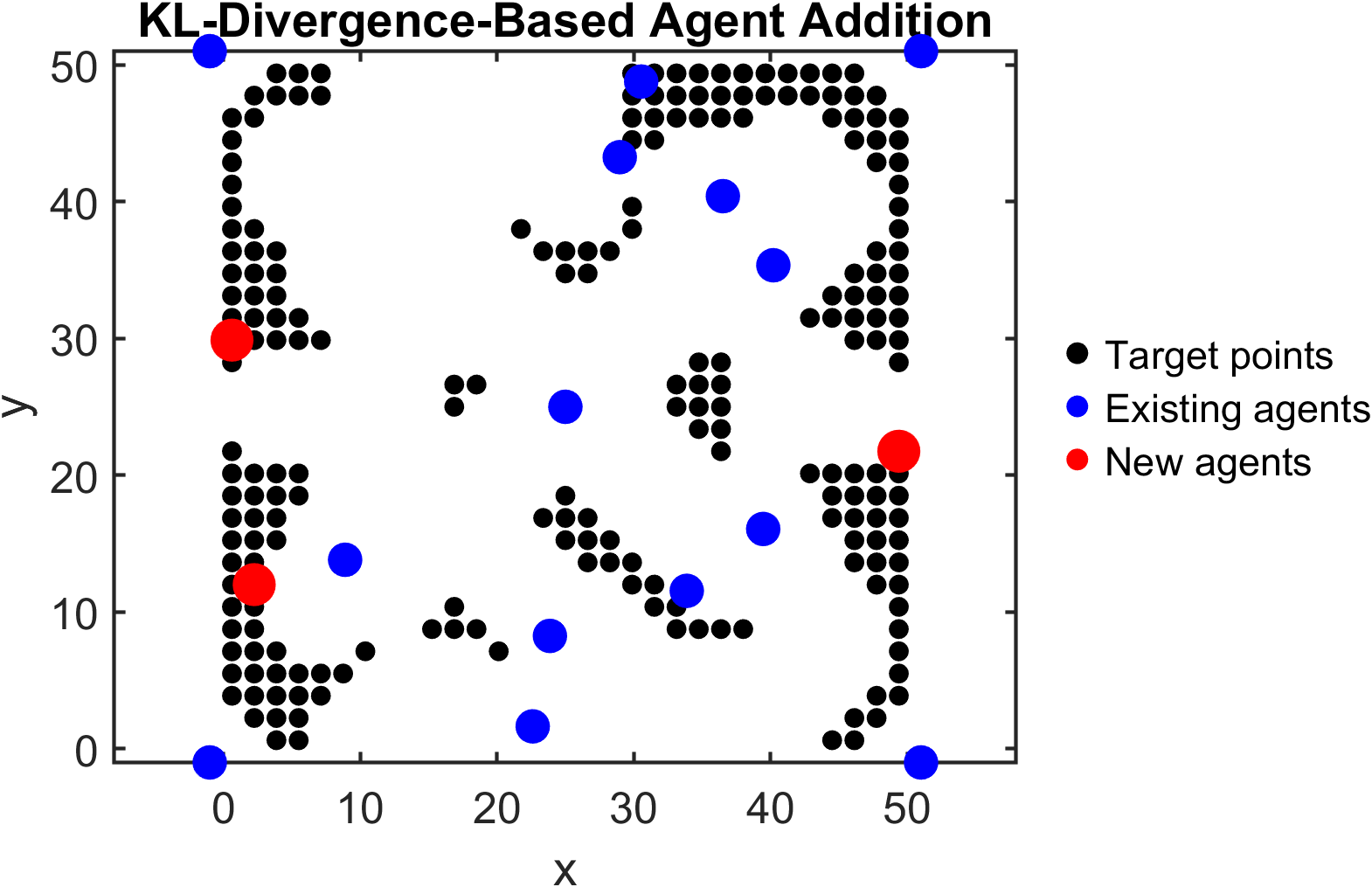}}
 \subfigure[]{\includegraphics[width=0.24\linewidth]{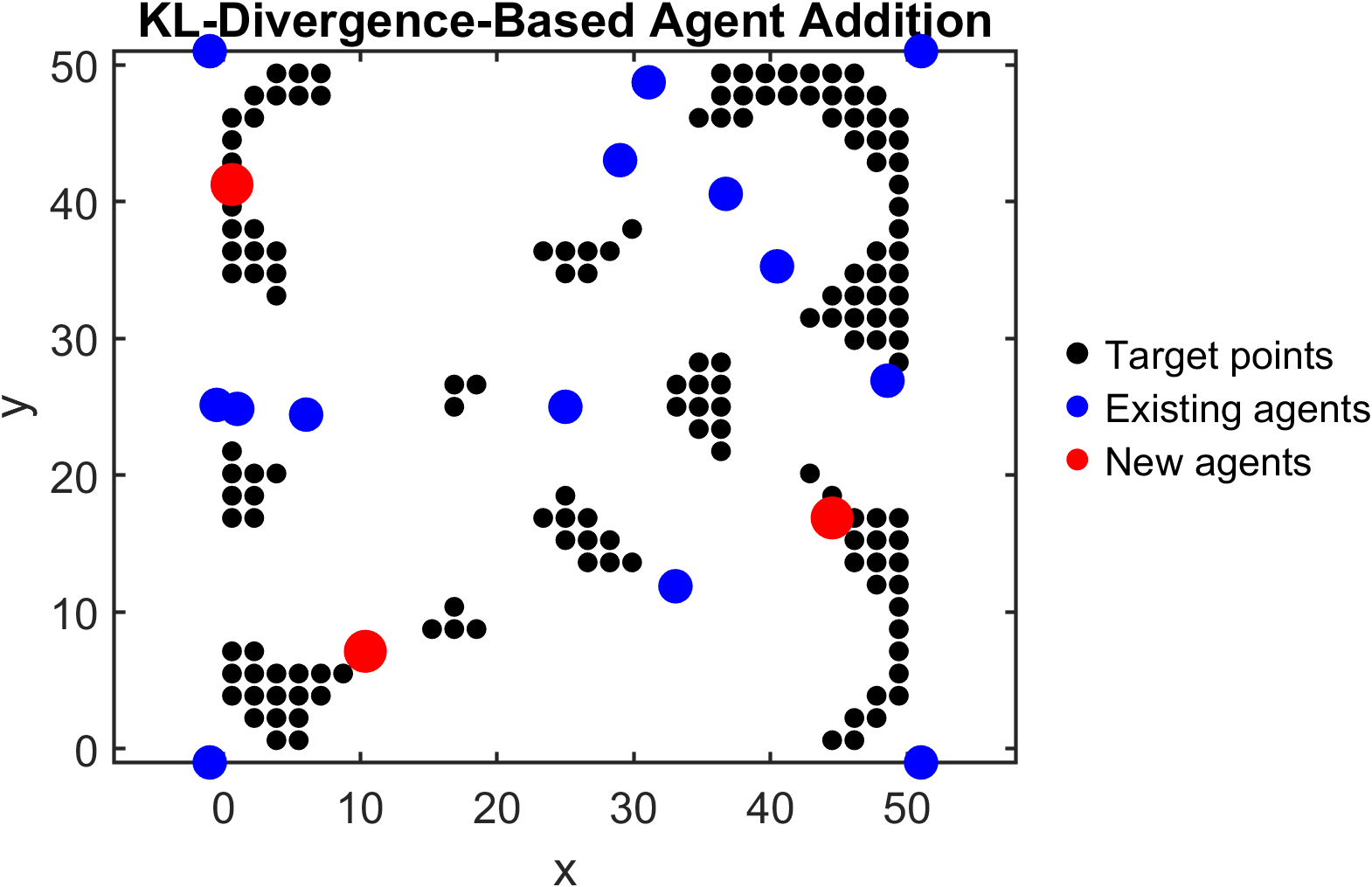}}
 \subfigure[]{\includegraphics[width=0.24\linewidth]{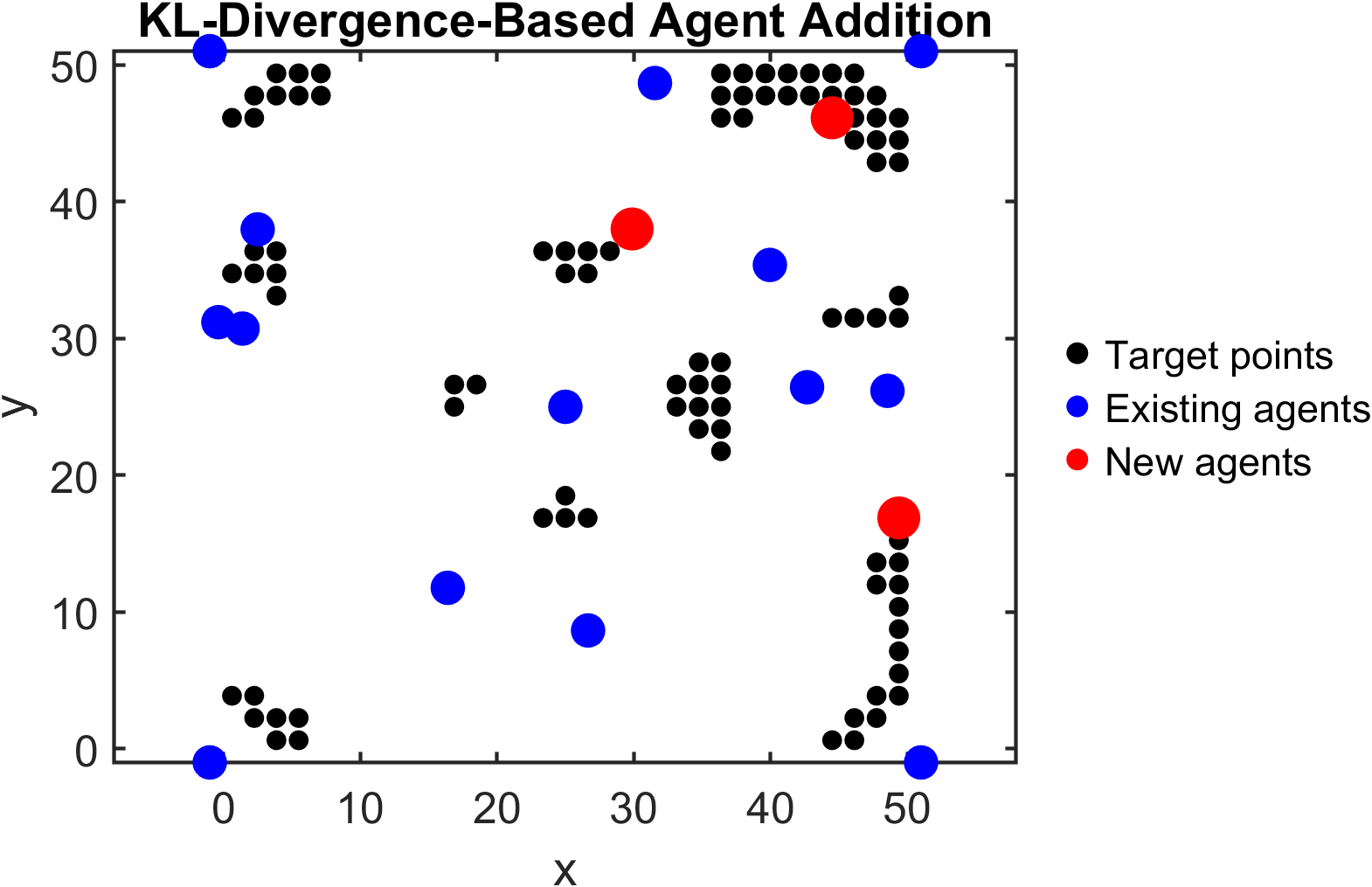}}
 \subfigure[]{\includegraphics[width=0.24\linewidth]{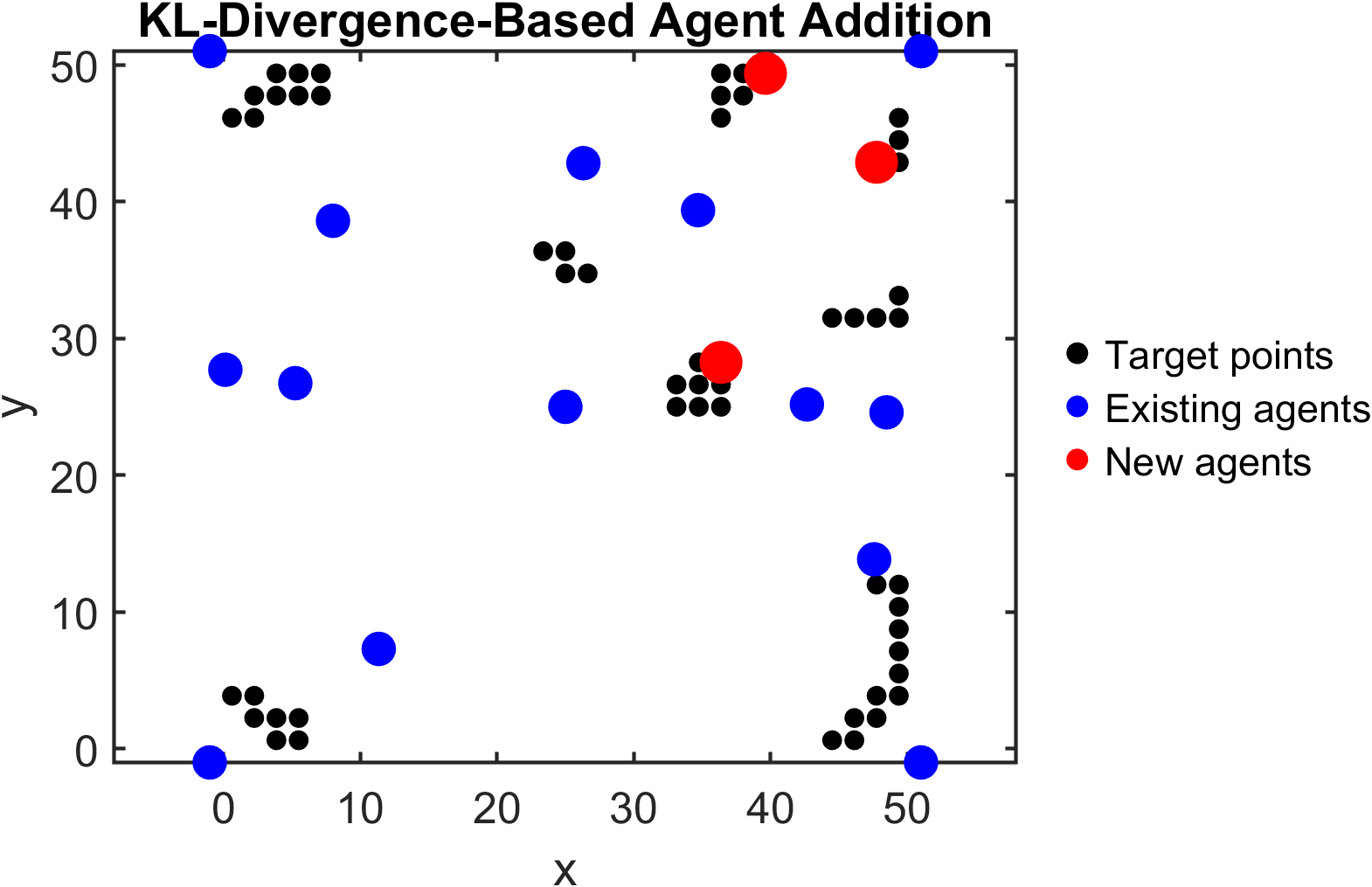}}
 \subfigure[]{\includegraphics[width=0.24\linewidth]{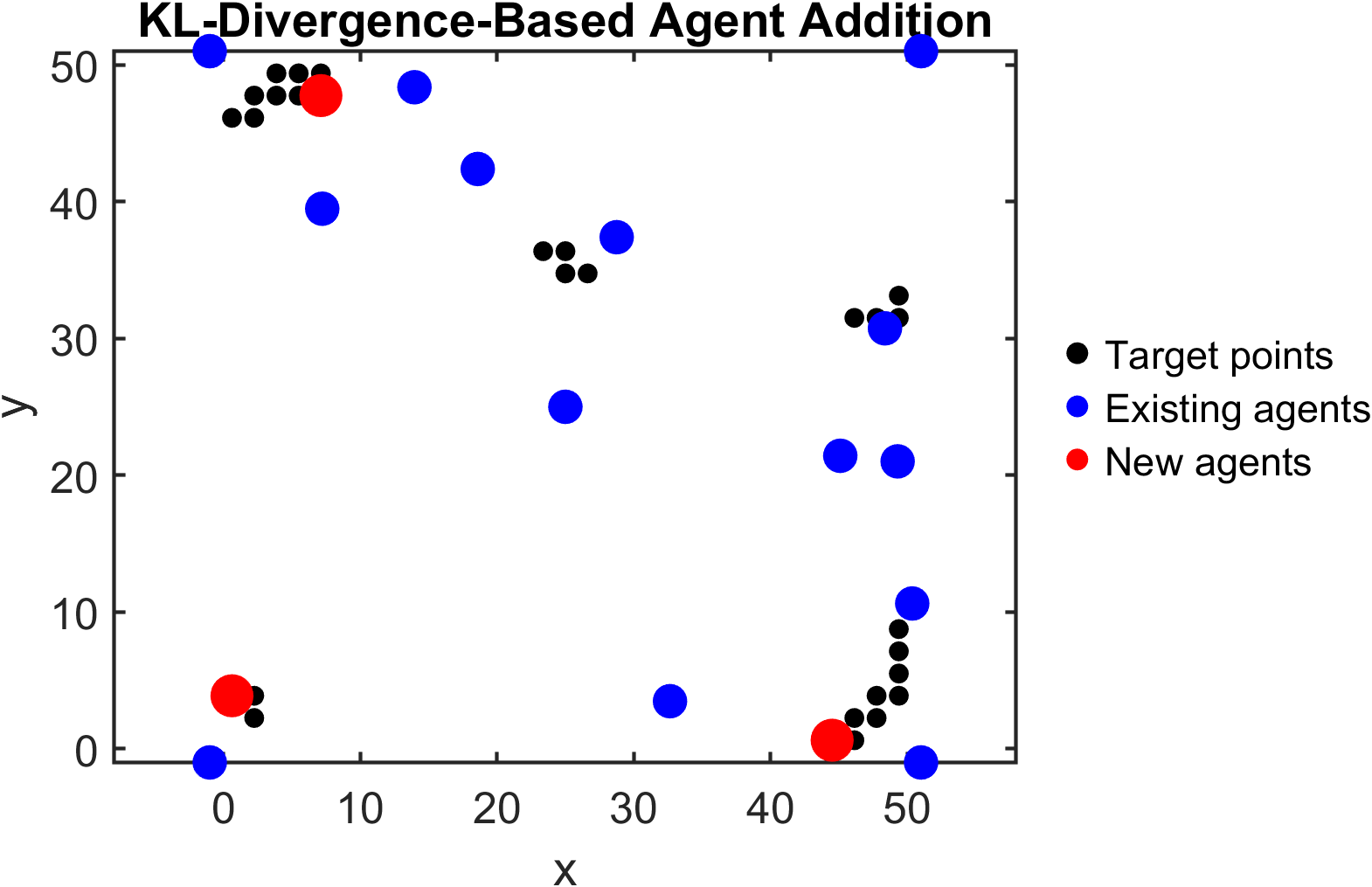}}
 \subfigure[]{\includegraphics[width=0.24\linewidth]{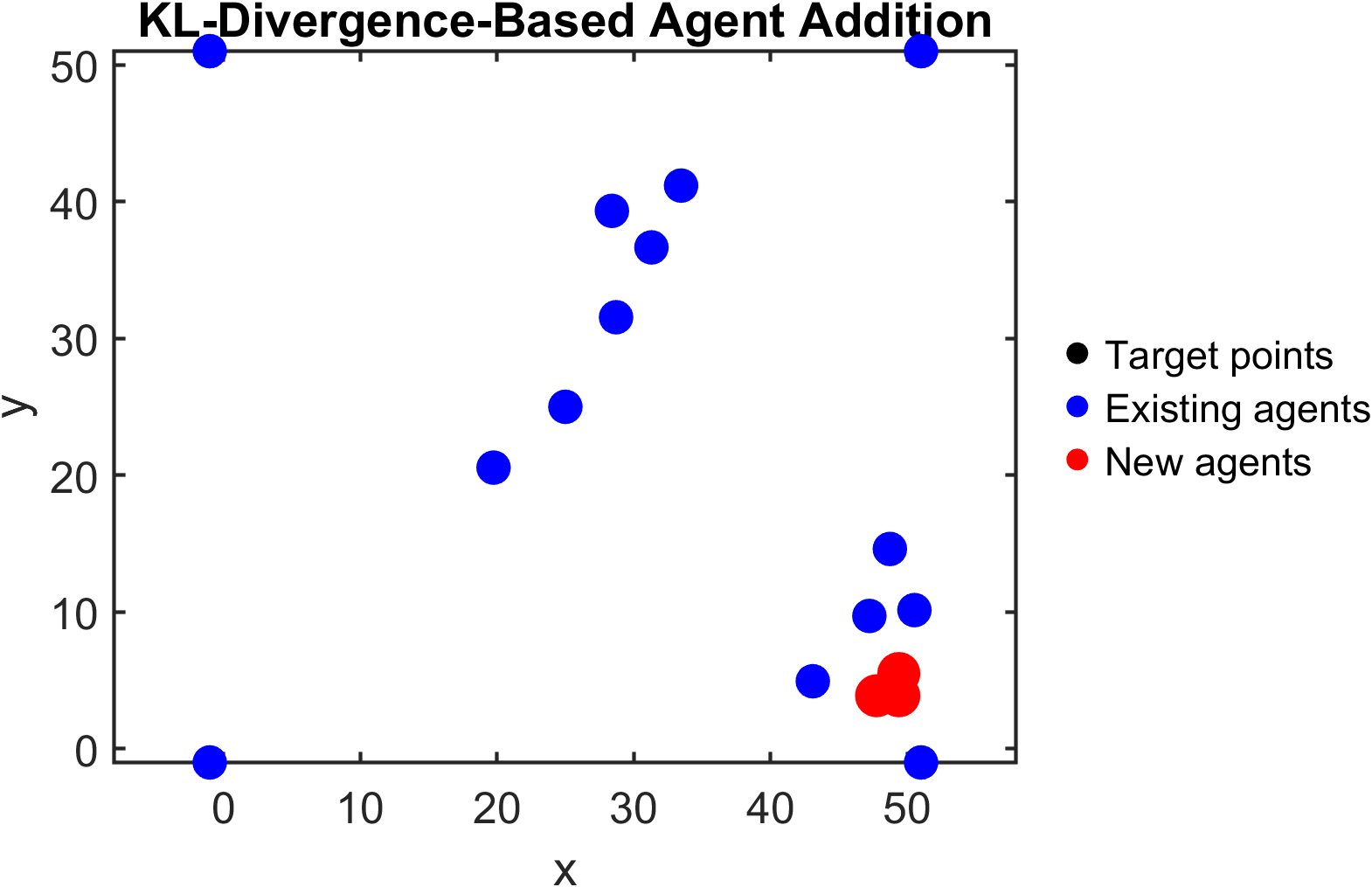}} 
 \vspace{-0.4cm}
\caption{Reference configurations generated for the successive operation
modes. For each mode
\(\sigma\),
the figure illustrates the reference configuration of the subsequent mode
\(\sigma^{+}\).
Black markers denote the unsearched target nodes remaining after completion
of mode
\(\sigma\),
while red markers indicate the reference positions assigned to the replacement
worker UAS. The resulting configurations maximize the enclosed probability
mass of the remaining target distributions while preserving the required
communication topology and geometric constraints for decentralized
coordination.}
\label{referenceall}
\end{figure*}
\begin{figure}[h]
\centering
 \subfigure[]{\includegraphics[width=0.49\linewidth]{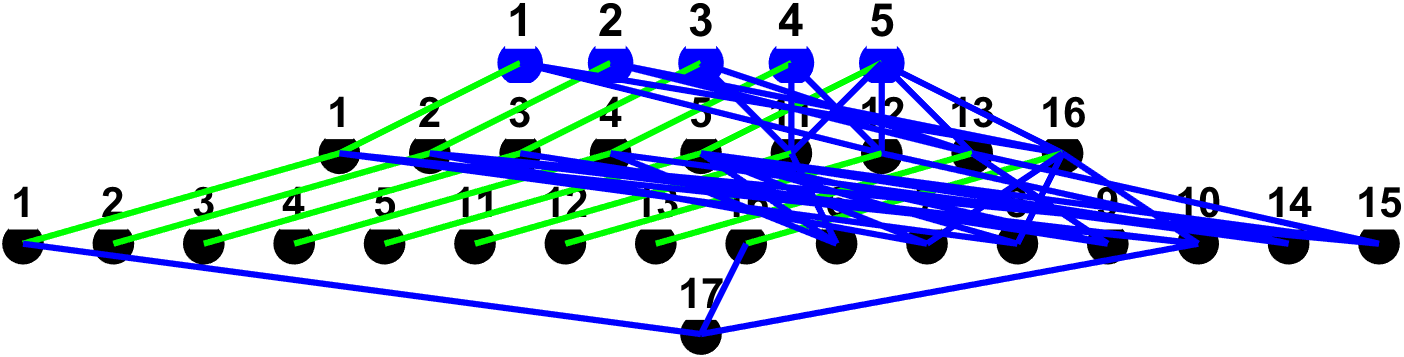}}
 \subfigure[]{\includegraphics[width=0.49\linewidth]{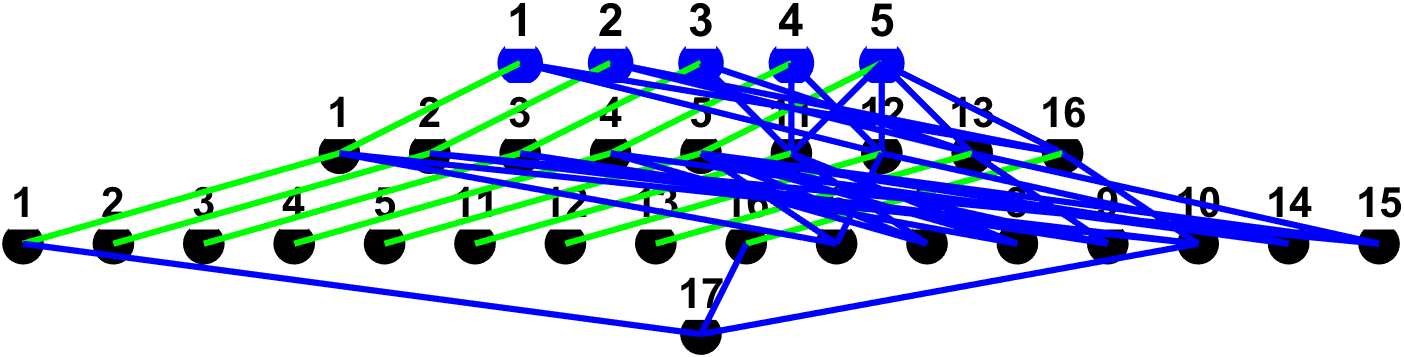}}
 \subfigure[]{\includegraphics[width=0.49\linewidth]{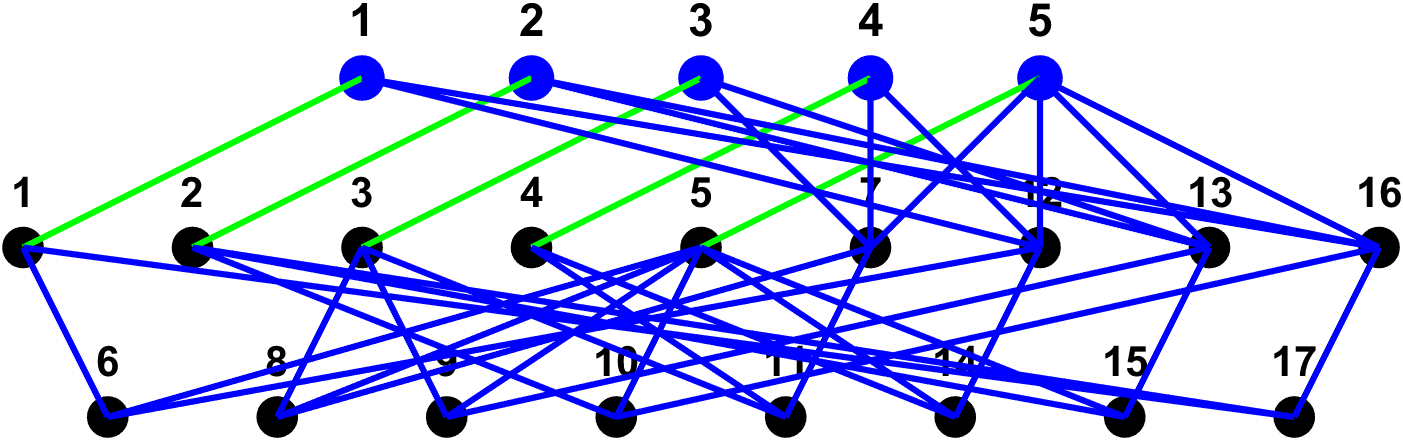}}
 \subfigure[]{\includegraphics[width=0.49\linewidth]{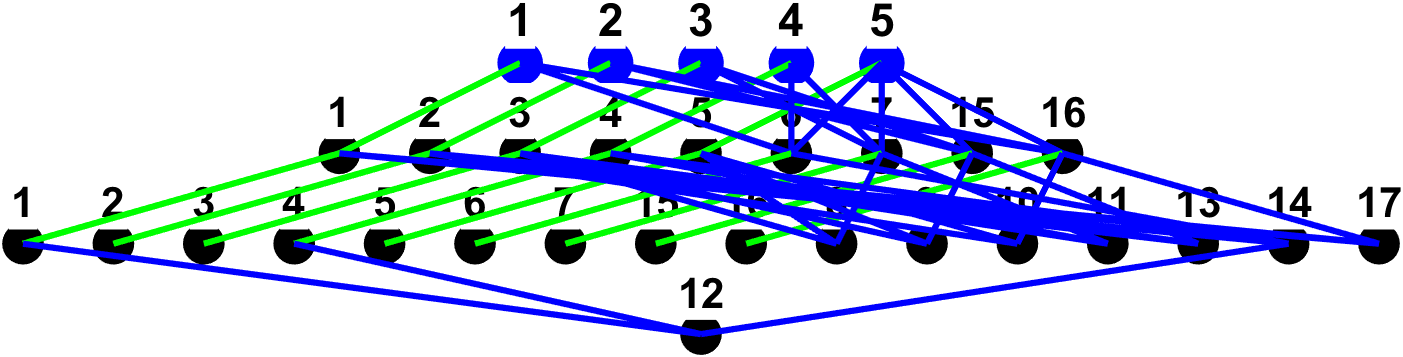}}
 \subfigure[]{\includegraphics[width=0.49\linewidth]{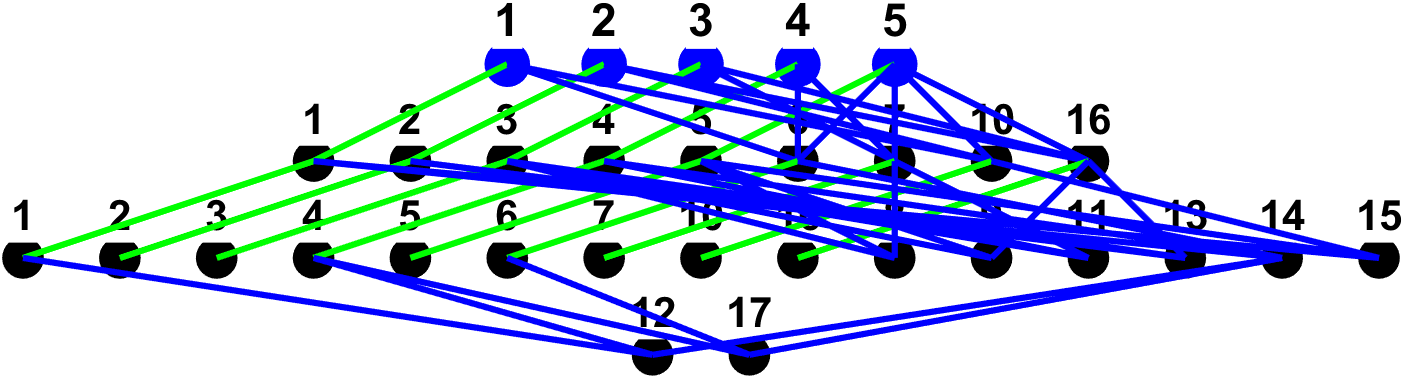}}
 \subfigure[]{\includegraphics[width=0.49\linewidth]{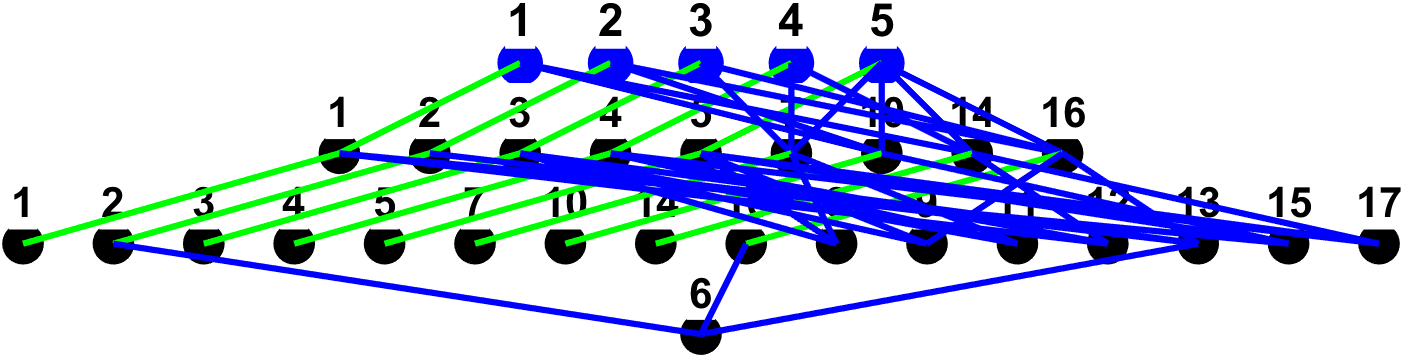}}
 \subfigure[]{\includegraphics[width=0.49\linewidth]{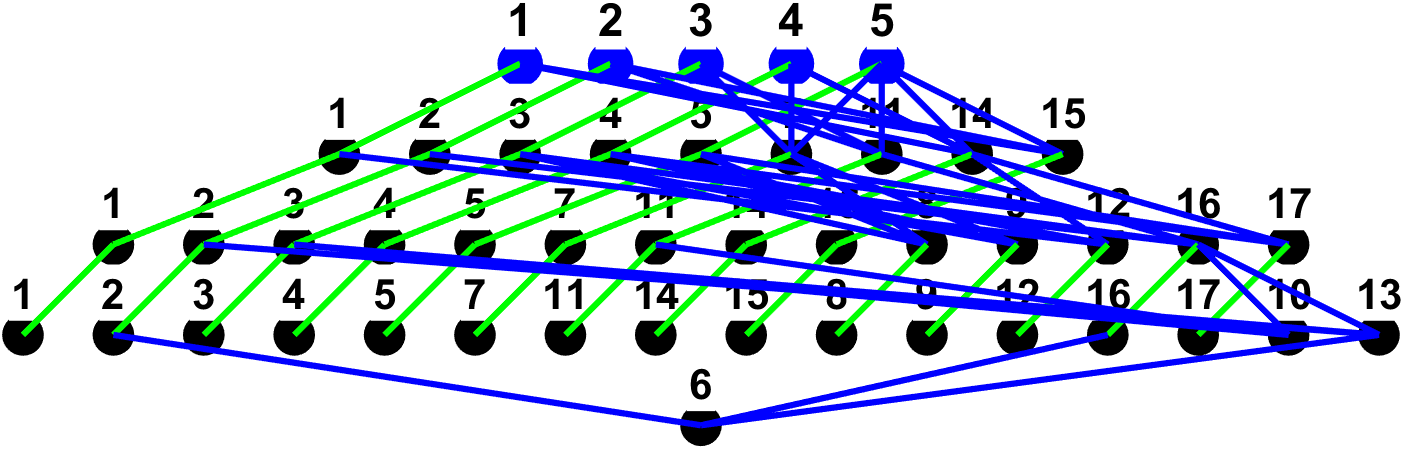}}
 \subfigure[]{\includegraphics[width=0.49\linewidth]{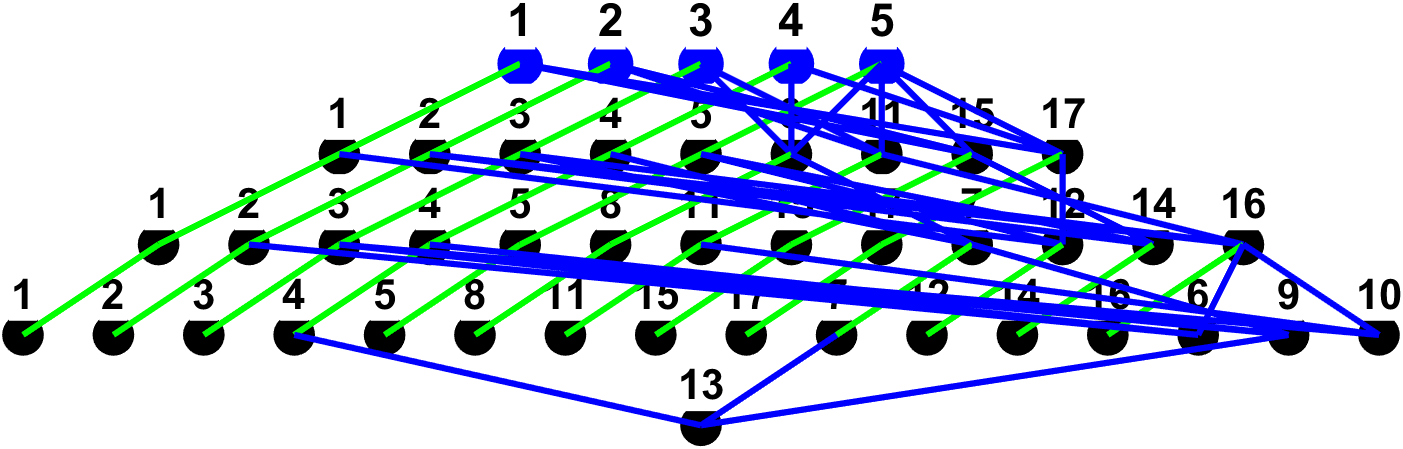}}
 \vspace{-0.4cm}
\caption{Mode-dependent DNN-based  communication
topologies used for decentralized surveillance. Each subfigure corresponds to
an operation mode
$\sigma\in\mathcal R$.
The directed edges define the information flow used by the active worker UAS
to compute their desired trajectories through recursive barycentric updates.
The communication graph remains fixed within each operation mode and is
reconfigured only when transitioning to the next mode.}
\label{dnnbasedcommunication}
\end{figure}
\section{Simulation Results}\label{Results}
We consider an multi-copter team consisting of $17$ agents indexed by
$\mathcal{V}=\{1,\ldots,17\}$, including five anchor agents
$\mathcal{A}=\{1,\ldots,5\}$ and twelve worker agents
$\mathcal{W}=\{6,\ldots,17\}$.
The workers are organized into $M=8$ operation modes that evolve cyclically according to the rotation graph
$\mathcal{G}_{\mathrm{rotation}}=(\mathcal{R},\mathcal{E}_{\mathcal{R}})$.
The desired coverage threshold is chosen as $x=1$.
The active-agent sets $\mathcal{V}_\sigma$ and the corresponding replaced-agent sets
$\mathcal{V}\setminus\mathcal{V}_\sigma$ for each operation mode
$\sigma\in\mathcal{R}$ are listed in Table~\ref{tab:rotation_modes}.
The communication topology is mode dependent, satisfying 
$
\mathcal{E}_\sigma\neq\mathcal{E}_{\sigma'}
4$ and $
\sigma\neq\sigma'$
although each  mode is repeated four times during one complete surveillance cycle. 


The proposed surveillance framework is evaluated over a rectangular region
\(
\mathcal{P}\subset\mathbb{R}^2
\)
of dimensions
\(52~\mathrm{m}\times52~\mathrm{m}\).
Each UAS is equipped with a downward-facing conical sensor whose footprint on
the ground is modeled as a circular sensing region of radius
\(2.5~\mathrm{m}\).
The surveillance domain is discretized into the node set
\(\mathcal D\), whose elements represent the candidate locations to be
searched. The reference positions of the anchor UAS are assumed to be known and remain
fixed throughout the mission. The initial operation mode is selected as
\(\sigma_0=1\), for which the unsearched node set satisfies
$
\mathcal U_1=\mathcal D$.
The initial reference positions of the worker UAS are then obtained using the
proposed Kullback-Leibler (KL) divergence-based initialization strategy,
which distributes the worker agents according to the spatial probability
distribution of the surveillance targets while accounting for the fixed anchor
configuration. Figure~\ref{KL_initial} illustrates the resulting initial
reference configuration, where the surveillance nodes are shown in black and
the worker reference positions are shown in red. For each operation mode
\(\sigma\in\mathcal R\),
the proposed framework computes the reference configuration of the subsequent
mode
\(\sigma^{+}\).
The resulting configuration shown in Fig. \ref{referenceall} specifies the reference positions of all active
worker UAS participating in the next surveillance cycle together with the
locations of the replacement agents determined by the proposed initialization
algorithm. The reference configuration is synthesized to maximize the
probability mass of the remaining unsearched target distributions while
preserving the communication topology and continuum-deformation constraints.
The unsearched target nodes associated with each mode are shown in black,
whereas the replacement-agent reference positions are shown in red. During each mode
$\sigma\in\mathcal R$,
the active worker UAS execute the surveillance task in a fully decentralized
manner using the proposed DNN-based communication
network. Each worker computes its desired trajectory solely from the states of
its prescribed in-neighbor agents, thereby eliminating the need for global
position information or centralized coordination. The mode-dependent DNN
communication topology employed throughout the surveillance mission is
illustrated in Fig.~\ref{dnnbasedcommunication}.

To illustrate the motion of an individual worker agent during the sequential coverage and replacement process, Fig.~\ref{fig:agent16_path} presents the complete planar path of agent~16. The trajectory is divided into the successive motion phases generated by the coverage-planning procedure. Ordinary motion segments are shown in blue, whereas the phases in which agent~16 belongs to the rotating replacement group are highlighted in red. 
The component-wise tracking responses are shown in Fig.~\ref{fig:agent16_components}. The upper and lower panels present the actual x- and y-coordinates of agent~16, respectively. 
\begin{figure}[ht]
\center
\includegraphics[width=3.3 in]{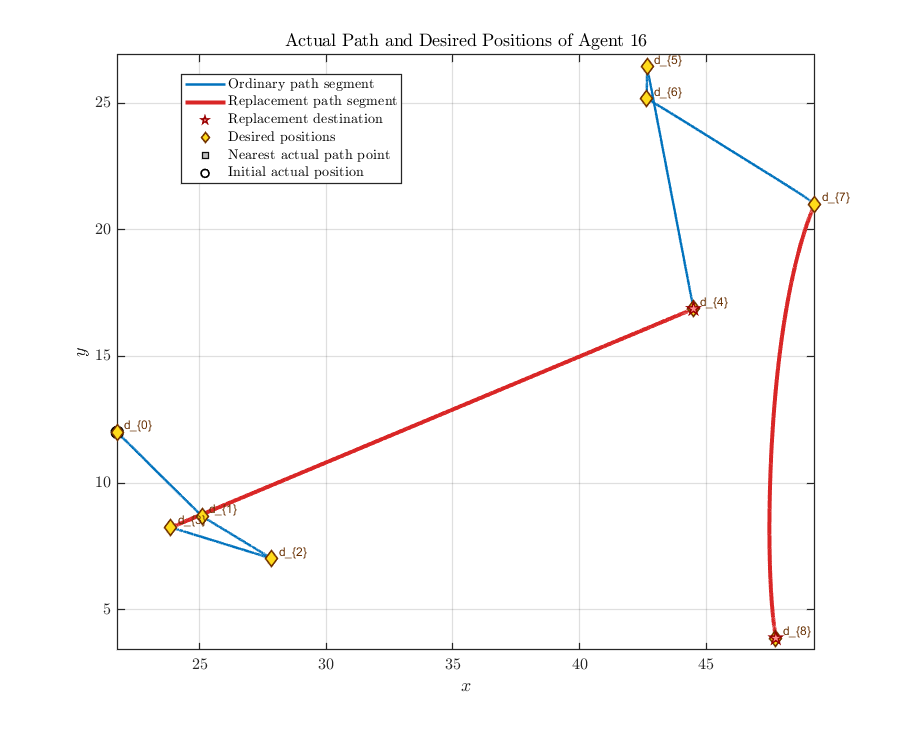}
\vspace{-.3cm}
\caption{%
    Actual planar trajectory of agent~16 during the sequential coverage
    operation. Ordinary motion phases are shown in blue, while the phases
    in which agent~16 is assigned to the rotating replacement group are
    highlighted in red. The markers indicate the desired positions
    generated at the successive planning phases. The proximity of the
    desired positions to the actual trajectory demonstrates phase-wise
    convergence of the fourth-order tracking controller.}
    \label{fig:agent16_path}
\end{figure} 
\begin{figure}[ht]
\center
\includegraphics[width=3.3 in]{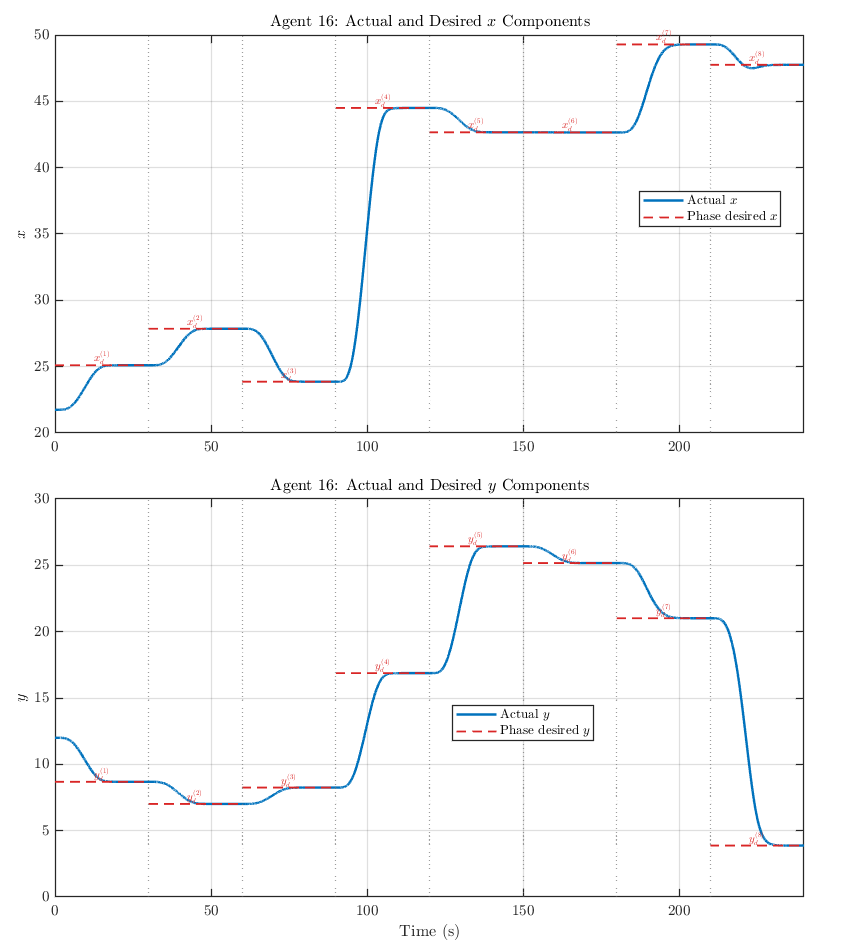}
\vspace{-.4cm}
\caption{%
    Component-wise tracking response of agent~16 over the complete
    multi-phase operation. The upper and lower panels show the actual
    \(x\)- and \(y\)-coordinates, respectively. The dashed horizontal
    segments indicate the terminal desired coordinate assigned during
    each phase, and the vertical dotted lines separate consecutive
    phases. The responses approach the corresponding phase-dependent
    desired coordinates while remaining continuous across the phase
    transitions.}
    \label{fig:agent16_components}
\end{figure}


\section{Conclusion}\label{Conclusion}

This paper presented a hybrid control framework for persistent aerial surveillance using energy-constrained multi-agent systems operating under deterministic cyclic team reconfiguration. The surveillance mission was formulated as a hybrid dynamical system that integrates continuous multi-agent coordination with discrete operation-mode transitions. A DNN-inspired communication architecture was developed to construct decentralized inter-agent communication topologies from mode-dependent reference configurations, enabling distributed computation of desired trajectories using only local information. A hierarchical linear temporal logic specification was introduced to formally characterize mission requirements, including anchor invariance, reference consistency, cyclic team rotation, distributed reachability, trajectory tracking, and prescribed coverage guarantees. Furthermore, an information-theoretic optimization framework was proposed to synthesize deterministic reference configurations that maximize surveillance effectiveness while preserving a mode-dependent communication topology.

\bibliographystyle{IEEEtran}
\bibliography{reference}
\end{document}